\documentclass[letterpaper, 10 pt, conference]{ieeeconf}
\usepackage{cite}
\usepackage{amsthm}
\usepackage{amssymb}
\usepackage{amsfonts}
\usepackage{mathtools}
\usepackage{algorithm}
\usepackage{algpseudocode} % More flexible than algorithmic
\usepackage{graphicx}
\usepackage{textcomp}
\usepackage{threeparttable}
\usepackage{hyperref}
\usepackage{amssymb}
\usepackage{enumitem}
\usepackage{tabu}
% \tabulinesep=2.0mm
\usepackage[utf8]{inputenc}
\usepackage{setspace}
\setstretch{0.99}
% \newcolumntype{C}{>{\centering\arraybackslash}X}
% \usepackage{subcaption}
% \captionsetup[subfigure]{font=small} % or footnotesize, scriptsize, etc.
\usepackage[font=footnotesize,labelfont=bf]{caption}
\usepackage[font=footnotesize,labelfont=bf]{subcaption}
\usepackage{multirow}
\usepackage{tabularx}
\usepackage{array}
\usepackage{makecell}
\usepackage{dblfloatfix}
\usepackage{tikz-network}
\tikzset{
block/.style = {draw, fill=white, rectangle, minimum height=2em, minimum width=5em},
tmp/.style  = {coordinate}, 
sum/.style= {draw, fill=white, circle, node distance=1cm},
input/.style = {coordinate},
output/.style= {coordinate},
pinstyle/.style = {pin edge={to-,thin,black}
}
}

\newtheorem{theorem}{\textbf{Theorem}}
\newtheorem{lemma}{\textbf{Lemma}}

\newtheorem{corollary}{\textbf{Corollary}}
\newtheorem{remark}{\textbf{Remark}}
\newtheorem{assumption}{\textbf{Assumption}}
\usepackage{authblk}
\usepackage{float}
\usepackage[table]{xcolor}
\setlength{\parindent}{0pt}
\newtheorem{definition}{\textbf{Definition}}
\usepackage{color}
\usepackage{tikz}
% \IEEEoverridecommandlockouts
% \overrideIEEEmargins
\bibliographystyle{ieeetr}

\IEEEoverridecommandlockouts 
% Define a command for the element-wise operation to improve clarity
%\newcommand{\absElem}[1]{\rm{abs}(#1)}
% Define the ReLU command
%\newcommand{\absElem}[1]{\langle#1 \rangle}

% Removes all the page numbers (except for the title page)
% % \def\BibTeX{{\rm B\kern-.05em{\sc i\kern-.025em b}\kern-.08em
%     T\kern-.1667em\lower.7ex\hbox{E}\kern-.125emX}}
% \markboth{\journalname, VOL. XX, NO. XX, XXXX 2017}
% {Author \MakeLowercase{\textit{et al.}}: Preparation of Papers for textsc{IEEE Control Systems
% Letters} (November 2021)}
% \title{Local Exponential Stability of Nonlinear Positive Feedback Systems with Feedforward Neural Networks}

\title{Local Stability and Region of Attraction Analysis for Neural Network Feedback Systems under Positivity Constraints}

\author{Hamidreza Montazeri Hedesh$^1$, Moh Kamalul Wafi$^1$, and Milad Siami$^1$\thanks{$^1$H. Montazeri Hedesh, M. K. Wafi, M. Siami are with the Department of Electrical \& Computer Engineering, Northeastern University, Boston, MA 02115, USA.
(e-mails: {\tt\footnotesize	\{montazerihedesh.h, wafi.m, m.siami\}@northeastern.edu}).}

 \thanks{This material is based upon work supported in part by the U.S. Office of Naval Research under Grant Award N00014-21-1-2431, in part by the U.S. National Science Foundation under Grant Award 2121121 and Grant Award 2208182, and in part by the DEVCOM Analysis Center and was accomplished under Contract Number W911QX-23-D0002. The views and conclusions contained in this document are those of the authors and should not be interpreted as representing the official policies, either expressed or implied, of the DEVCOM Analysis Center or the U.S. Government. The U.S. Government is authorized to reproduce and distribute reprints for Government purposes notwithstanding any copyright notation herein.}
}
% \thanks{$^1$H. Montazeri Hedesh and M. Siami are with the Department of Electrical \& Computer Engineering, Northeastern University, Boston, MA 02115, USA.
% (e-mails: {\tt\footnotesize \{montazerihedesh.h, m.siami\}@northeastern.edu}).}}
\begin{document}
\maketitle

\begin{abstract}
We study the local stability of nonlinear systems in the Lur’e form with static nonlinear feedback realized by feedforward neural networks (FFNNs). By leveraging positivity system constraints, we employ a localized variant of the Aizerman conjecture, which provides sufficient conditions for exponential stability of trajectories confined to a compact set. Using this foundation, we develop two distinct methods for estimating the Region of Attraction (ROA): (i) a less conservative Lyapunov-based approach that constructs invariant sublevel sets of a quadratic function satisfying a linear matrix inequality (LMI), and (ii) a novel technique for computing tight local sector bounds for FFNNs via layer-wise propagation of linear relaxations. These bounds are integrated into the localized Aizerman framework to certify local exponential stability. Numerical results demonstrate substantial improvements over existing integral quadratic constraint-based approaches in both ROA size and scalability.
\end{abstract}
\allowdisplaybreaks
% \begin{keywords}%
%   List of keywords%
% \end{keywords}
\section{Introduction}

Neural networks (NNs) are increasingly being used in control systems due to their expressive power and ability to learn complex nonlinear mappings\cite{HORNIK1989359}. There are recent interesting usage of NN structures in Control \cite{goel2024can,honarpisheh2025generalization}. However, certifying the stability of NN-in-the-loop systems remains a major challenge, particularly when such systems are deployed in safety-critical applications. Traditional Lyapunov-based analysis is often infeasible for high-dimensional or opaque NN controllers, motivating the need for scalable, structure-exploiting methods for stability verification.

Existing techniques that attempt to verify stability for NN feedback---such as those using integral quadratic constraints (IQCs)\cite{yin2021stability,de2023event,noori2024stability,noori2024complete}, sum-of-squares optimization \cite{newton2022stability}, passivity-based approaches \cite{jin2020stability}, circle criterion \cite{richardson2024strengthened} or global Lipschitz bounds \cite{szegedy2013intriguing}---either scale poorly with system size or result in overly conservative guarantees.
Recent advances such as RecurJac \cite{zhang2019recurjac}, CROWN \cite{wang2021beta}, and IBP \cite{gowal2018effectiveness} provide tools for bounding NN outputs, but these have not been explicitly integrated into closed loop system frameworks. There are other useful control methods such as density functions \cite{11012717} or positive system methods \cite{bill2016stability} that has not yet extended into verification of NN in the loop.

\emph{Positive systems} offer a favorable structure for a streamlined and scalable stability analysis. These systems have been widely explored in the literature of linear systems \cite{shafai2024positive,10708136}. However, their benefits in nonlinear systems have not been extensively studied. The positive Lur’e framework provides a well-established setting for studying absolute stability of complex nonlinear systems. For example, the positive Aizerman conjecture gives sufficient conditions for global exponential stability in this context when the feedback nonlinearity lies within a sector \cite{drummond2022aizerman}.
However, there is currently a lack of techniques that can leverage the structural properties of positive systems to enable \emph{scalable} and \emph{less conservative} verification of NN feedback stability.
% One example of such exploitation is \cite{Hamidrezaglobalsectorbound}.

This paper addresses this gap by proposing two tractable and scalable methods for local stability analysis and estimating the ROA of positive Lur’e systems with NN feedback, both grounded in a localized version of the positive Aizerman conjecture. The first method uses a Lyapunov-based approach with linear matrix inequalities (LMIs), applicable to general sector-bounded nonlinearities. The second introduces a novel, \emph{tight local sector bound for FFNNs}, which to the best of our knowledge, is the first such formulation in the literature. This bound is propagated layer by layer through the network and naturally fits within the Aizerman framework, enabling a scalable and certifiable ROA analysis for NN feedback systems. Compared to existing bounds such as RecurJac, CROWN, and IBP, this novel local sector bound—built upon our previously proposed global sector bound \cite{Hamidrezaglobalsectorbound} and its application in robust stability of NN feedback loops \cite{hedesh2025robust}—features a nonaffine linear shape, and is specifically developed for scalable local stability analysis of NN feedback loops.
The key contributions of this paper are:
% \begin{itemize}
    (i) A localized variant of the Positive Aizerman Conjecture tailored for local stability analysis and further calculating the corresponding ROA;
    (ii) A Lyapunov-based method for local stability analysis and ROA estimation in positive Lur’e systems with general sector-bounded feedback;
    (iii) A novel local sector bound formulation for FFNNs, and further integration of it in a local stability analysis framework for NN feedback loops.
% \end{itemize}

These contributions are significant since they provide a formal guarantee that despite the nonlinear nature of the NN, there is a specific area in the state space where the system is stabilizable. This result highlights the practical utility of NN controllers in real-world applications, where local stability is often sufficient. Through rigorous analysis and simulation, this paper provides both theoretical insights and practical guidelines for implementing NN controllers in feedback systems. This opens up new possibilities for using machine learning techniques in classical control theory, where traditional linear control methods fall short \cite{10412393,10903652}.

Finally, numerical results demonstrate that our methods outperform existing IQC-based approaches \cite{yin2021stability} in ROA size and computational efficiency, making them suitable for larger systems where existing verification methods are impractical.

\subsection{Notations}
We denote the set of real numbers, real vectors of dimension $n$, and real matrices of dimension $n\times m$ by $\mathbb{R}$, $\mathbb{R}^n$, and $\mathbb{R}^{n\times m}$. The orders $>,<,\geq$, and $\leq$ are interpreted element-wise when applied to vectors and matrices. In addition, the operators $\succ$ and $\prec$ denote positive and negative definite matrices. Subsequently, a doubly positive matrix $A$ is both $A\succ 0$ and $A>0$. The symbol $\mathbf{0}$ denotes a zero vector of appropriate dimension. We also define $\mathbb{R}_+^{n\times m} := \{A\in\mathbb{R}^{n\times m}\mid A\geq 0\}$. The sets $\mathbb{R}_+^{n}$ and $\mathbb{R}_+$ are defined similarly. A Metzler matrix is characterized by having nonnegative off-diagonal elements.

\section{Proposed Theoretical Framework}
First, we lay the theoretical foundations for presenting our results, encompassing positive Lur’e systems and the positive Aizerman conjecture. throughout the paper, we briefly use positivity for internal positivity. The following definition and lemmas along with their proof can be found in \cite{farina2011positive}.

\subsection{Positive Lur’e Systems}
Consider a linear time invariant (LTI) system of the form:
\begin{equation}\label{eq:generallti}
    \dot x(t) = A x(t) + B u(t), \quad
    y(t) = C x(t),
\end{equation}
where $A \in \mathbb R^{n\times n}$, $B \in \mathbb R^{n\times m}$, and $C \in \mathbb R^{p\times n}$ are constant matrices. The vectors $x(t) \in \mathbb R^n$, $u(t) \in \mathbb R^m$, and $y(t) \in \mathbb R^p$ denote the state, input, and output variables, respectively,  with initial condition $x(0)$.
\begin{definition}\label{def:positive system}
    The system in \eqref{eq:generallti}, is called ``positive'' if, \( \forall t > 0 \), we have \( x(t) \geq 0 \), given \( x(0)\coloneqq x_0 \geq 0 \) and \( u(t) \geq 0 \).
\end{definition}
\begin{lemma}
    The system in \eqref{eq:generallti} is positive if and only if \( A \) is a Metzler matrix, $B \in \mathbb R_+^{n\times m}$, and $C \in \mathbb R_+^{p\times n}$.
\end{lemma}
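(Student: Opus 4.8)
The plan is to exploit the variation-of-constants representation of the solution to \eqref{eq:generallti}, namely
\[
x(t) = e^{At} x(0) + \int_0^t e^{A(t-s)} B\, u(s)\, ds, \qquad y(t) = C x(t),
\]
and to reduce the entire statement to the sign structure of the matrix exponential $e^{At}$ together with elementwise probing by standard basis vectors.

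For the sufficiency direction ($\Leftarrow$), the key fact I would establish first is that a Metzler $A$ yields a nonnegative exponential, $e^{At} \geq 0$ for all $t \geq 0$. I would prove this with the standard shift argument: choose $\alpha > 0$ large enough that $M := A + \alpha I \geq 0$ (possible precisely because the off-diagonal entries of $A$ are already nonnegative), and write $e^{At} = e^{-\alpha t} e^{Mt}$. Since $M \geq 0$, its matrix exponential $e^{Mt} = \sum_{k \geq 0} (Mt)^k/k!$ is a convergent sum of entrywise nonnegative matrices and hence nonnegative, while $e^{-\alpha t} > 0$; therefore $e^{At} \geq 0$. With $B \geq 0$, $C \geq 0$, $x(0) \geq 0$, and $u \geq 0$, every term in the integral representation is a product or integral of nonnegative quantities, so $x(t) \geq 0$, and then $y(t) = C x(t) \geq 0$.

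For the necessity direction ($\Rightarrow$), I would probe the system with nonnegative initial conditions and inputs built from standard basis vectors. Taking $u \equiv 0$ and $x(0) = e_j \geq 0$ gives $x(t) = e^{At} e_j$, i.e.\ the $j$-th column of $e^{At}$, which must remain nonnegative for all $t$; differentiating at $t = 0$ and reading off the $i$-th component for $i \neq j$ yields the crucial inequality $A_{ij} = \dot x_i(0) \geq 0$, establishing that $A$ is Metzler. The relation $C \geq 0$ follows immediately from output nonnegativity, since $y(0^+) = C e_j \geq 0$ for each $j$, and $B \geq 0$ follows from setting $x(0) = 0$ with constant input $u \equiv e_j$, so that $\dot x(0) = B e_j \geq 0$.

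The main obstacle is the Metzler (off-diagonal) direction of necessity: the inequality $A_{ij} \geq 0$ does not come from evaluating $x_i(t)$ at a single time but from a boundary derivative argument. The careful point to justify is that if a differentiable trajectory satisfies $x_i(0) = 0$ and $x_i(t) \geq 0$ for all $t \geq 0$, then its right derivative at the origin obeys $\dot x_i(0^+) = \lim_{t \downarrow 0} x_i(t)/t \geq 0$; applied componentwise this pins down the sign of each off-diagonal entry while—correctly—leaving the diagonal entries $A_{ii}$ unconstrained, exactly matching the definition of a Metzler matrix.
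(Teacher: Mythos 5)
Your proof is correct, and it is essentially the classical argument: the paper itself does not prove this lemma but instead defers to the cited reference \cite{farina2011positive}, and the proof given there proceeds exactly as you do — nonnegativity of $e^{At}$ for Metzler $A$ via the shift $A+\alpha I\ge 0$, the variation-of-constants formula for sufficiency, and probing with canonical initial states/inputs plus a boundary-derivative argument for necessity. Your handling of the one delicate step (that $x_i(0)=0$ and $x_i(t)\ge 0$ force $\dot x_i(0^+)\ge 0$, constraining only the off-diagonal entries of $A$) is exactly right.

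One point worth flagging, which is a defect of the paper's statement rather than of your argument: Definition~\ref{def:positive system} as written requires only $x(t)\ge 0$, with no condition on the output. Under that literal definition the necessity of $C\in\mathbb{R}_+^{p\times n}$ cannot be deduced — $C$ plays no role in the state evolution, so any $C$ would be admissible. Your proof implicitly uses the standard notion of internal positivity (nonnegative initial state and input imply nonnegative state \emph{and} output), which is the definition used in \cite{farina2011positive} and is clearly what the paper intends; your step $y(0^+)=Ce_j\ge 0$ (or, more carefully, $Ce^{At}e_j\ge 0$ for $t>0$ followed by letting $t\downarrow 0$) then correctly pins down $C\ge 0$. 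If you want your write-up to be airtight against the paper's stated definition, you should either note that the output-nonnegativity requirement must be added to Definition~\ref{def:positive system}, or restrict the equivalence to the conditions on $A$ and $B$ only.
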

\begin{lemma}\label{lem2}
    If a positive system in~\eqref{eq:generallti} is also asymptotically stable, then there exists a vector \( v \in \mathbb{R}_+^n \), with \( v > 0 \), such that \( v^\top A < 0 \).
\end{lemma}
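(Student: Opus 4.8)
The plan is to combine the Metzler structure guaranteed by the previous lemma with the Hurwitz property coming from asymptotic stability, and then to exhibit the certificate $v$ explicitly through the inverse of $A$ rather than asserting its existence abstractly. First I would record that positivity forces $A$ to be Metzler, so that for any $\sigma \ge -\min_i A_{ii}$ the shifted matrix $A + \sigma I$ is entrywise nonnegative (the off-diagonal entries already are, and the diagonal shift handles the rest). Writing $e^{At} = e^{-\sigma t}\,e^{(A+\sigma I)t}$ and expanding the exponential of a nonnegative matrix as a convergent power series of nonnegative terms then gives $e^{At} \ge 0$ for all $t \ge 0$. This is nothing more than the nonnegativity of the state-transition matrix of a positive system, so it is exactly what one expects.

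Next I would use asymptotic stability, which for an LTI system is equivalent to $A$ being Hurwitz, hence invertible, and invoke the standard identity $\int_0^\infty e^{At}\,dt = -A^{-1}$. Integrating the entrywise-nonnegative family $\{e^{At}\}_{t\ge 0}$ over $[0,\infty)$ yields $-A^{-1} \ge 0$; that is, $A^{-1}$ is entrywise nonpositive while remaining nonsingular. With this in hand the construction is direct: letting $\mathbf{1}$ denote the all-ones vector, I set $v \coloneqq -(A^{-1})^\top \mathbf{1}$, so that $v^\top = -\mathbf{1}^\top A^{-1}$ and therefore $v^\top A = -\mathbf{1}^\top A^{-1} A = -\mathbf{1}^\top < 0$, which is precisely the desired strict elementwise inequality. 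Since $-(A^{-1})^\top \ge 0$ and $\mathbf{1} > 0$, we immediately obtain $v \ge 0$.

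The one point requiring care, and the step I expect to be the main obstacle, is upgrading $v \ge 0$ to the strict positivity $v > 0$ demanded by the statement. Here I would argue by nonsingularity: the matrix $P \coloneqq -(A^{-1})^\top$ is nonnegative and invertible, so none of its rows can vanish identically, as a zero row would render $P$ singular. Consequently each component $v_i = \sum_j P_{ij}$ is a sum of nonnegative terms containing at least one strictly positive entry, giving $v_i > 0$ for every $i$ and hence $v > 0$.

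For completeness I would remark that an alternative route proceeds through the Perron--Frobenius eigenvector of the nonnegative matrix $A + \sigma I$, whose dominant eigenvalue equals $\rho(A+\sigma I) = \sigma + \max_i \operatorname{Re}\lambda_i(A) < \sigma$ under the Hurwitz hypothesis. This recovers a nonnegative $v$ with $v^\top A = \big(\max_i \operatorname{Re}\lambda_i(A)\big) v^\top \le 0$, but securing the strict inequality then requires an additional irreducibility or small-perturbation argument, which is why I prefer the integral construction above.
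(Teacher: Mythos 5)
Your proof is correct, but note that the paper does not actually prove this lemma at all: it defers entirely to the citation of Drummond et al.\ (and the references therein), where the statement appears as a standard fact about Metzler Hurwitz matrices. What you have produced is a self-contained version of the classical argument from the positive-systems/M-matrix literature, and every step checks out: positivity gives $A$ Metzler (the paper's Lemma~1), the shift $A+\sigma I\ge 0$ with the commuting-exponential factorization gives $e^{At}\ge 0$, the Hurwitz property justifies both convergence of $\int_0^\infty e^{At}\,dt$ and the identity $\int_0^\infty e^{At}\,dt=-A^{-1}$, so $-A^{-1}\ge 0$ entrywise; the choice $v=-(A^{-1})^\top\mathbf{1}$ then yields $v^\top A=-\mathbf{1}^\top<0$ exactly, and your upgrade from $v\ge 0$ to $v>0$ is the right one --- a nonnegative invertible matrix cannot have a zero row, so each component of $v$ is a row sum containing a strictly positive entry. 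Your closing remark about the Perron--Frobenius alternative is also well judged: that route only delivers $v^\top A\le 0$ without an irreducibility assumption, which is precisely why the integral construction is preferable. What your write-up buys over the paper's approach is self-containedness and an explicit certificate $v$ (useful, since $v$ is actually needed later in Lemma~\ref{lem:mathmanip} to compute the ratio $v_m/v_M$); what the citation buys the paper is brevity. The only cosmetic caveat is that the strict inequality $v^\top A<0$ obtained here is elementwise equal to $-\mathbf{1}^\top$, which is stronger than needed, and your $v$ inherits no normalization --- neither affects correctness.
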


We refer the reader to \cite{drummond2022aizerman} and the references therein for the proof of Lemma~\ref{lem2}. For notational simplicity, we omit the explicit time dependence and write \( x \), \( y \) and \( u \) in place of \( x(t) \), \( y(t) \), and \( u(t) \) whenever the meaning is clear from context.

To define a positive Lur’e system, we consider the system in \eqref{eq:generallti} under a static nonlinear feedback represented as $u = \Phi(Cx,.)$, as illustrated in Fig. \ref{fig:luresystem}. The resulting closed-loop dynamics can be written as: 
\begin{equation}\label{eq:positiveluresystem}
    \dot{x} = Ax + B \Phi(Cx, t),
\end{equation}
where \( \Phi: \mathbb{R}^p \times \mathbb{R} \to \mathbb{R}^m \) is a multivariate function that satisfies the following assumption. The assumption ensures the existence and uniqueness of solutions.
\begin{assumption}\label{assu1}
    There exists a unique, locally absolutely continuous function $\chi: \mathbb{R}_+ \to \mathbb{R}^n$ that satisfies \eqref{eq:positiveluresystem} almost everywhere for any $x_0 \in \mathbb{R}^n$. Moreover, $\Phi(0, t) = 0, \forall t \geq 0$, ensuring that $x_* = 0$ is an equilibrium point of \eqref{eq:positiveluresystem}. The function $\Phi(z, t)$ is locally Lipschitz in $z$ and measurable in $t$, with certain mild boundedness assumptions \cite[Thm. 54, Prop. C.3.8]{sontag2013mathematical}.
\end{assumption}

Furthermore, to enable stability analysis, it is necessary to characterize the sector bounds of the nonlinear component. For a multi-input multi-output (MIMO) positive system, these bounds are best described using component-wise inequalities. The function $\Phi$ is said to be sector bounded in $[\Sigma_1, \Sigma_2]$ if:
\begin{equation}\label{eq:mimo sector bound}
    \Sigma_1 z \leq \Phi(z, t) \leq \Sigma_2 z, \quad \forall z \in \mathbb{R}^p_+, \forall t \geq 0,
\end{equation}
where $\Sigma_1, \Sigma_2 \in \mathbb{R}^{m \times p}$, and $\Sigma_1 \leq \Sigma_2$.

A positive Lur'e system is defined as a system of the form \eqref{eq:positiveluresystem}, where the nonlinearity $\Phi(\cdot)$ satisfies the sector condition \eqref{eq:mimo sector bound}, and the overall system meets the criteria of Definition \ref{def:positive system}. With this foundation, we are now ready to introduce the positive Aizerman conjecture.

\subsection{Positive Aizerman Conjecture}
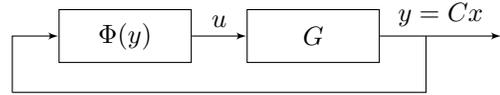
\begin{figure}[t!]
    \centering
    \begin{tikzpicture}[auto, node distance=2cm,>=latex']
        \node [tmp] (tmp1) {};
        \node [tmp, below of=tmp1,node distance=.75cm] (tmp) {};
        \node [block, right of=tmp1,node distance=1.5cm] (Control) {$\Phi(y)$};
        \node [block, right of=Control,node distance=2.5cm] (Plant) {$G$};
        \node [tmp, right of=Plant,node distance=2.5cm] (tmp2) {};
        \node [tmp, right of=Plant,node distance=1.5cm] (tmp3) {};
        \draw [->] (Control) -- node{$u$} (Plant);
        \draw [->] (Plant) -- node{$y=Cx$} (tmp2);
        \draw [-] (tmp3) |- (tmp);
        \draw [->] (tmp) |- (Control);
    \end{tikzpicture}
    \caption{\small Lur'e system with plant $G$ and nonlinear controller $\Phi$.}\vspace{-.4cm}
    \label{fig:luresystem}
\end{figure}
The Aizerman conjecture is a classical tool for absolute stability analysis in Lur’e systems. Although counterexamples exist for the general case, it holds under more stringent conditions—particularly positive systems \cite{bill2016stability}.

The following theorem, originally presented and proved in \cite{drummond2022aizerman}, formalizes the validity of Aizerman’s conjecture for MIMO positive systems.
\begin{theorem}\label{the:positiveaizerman}
    Consider a Lur'e system in \eqref{eq:positiveluresystem}, where $B, C \geq 0,$ and $\Phi$ is sector bounded in the interval $[\Sigma_1$, $\Sigma_2]$ in the sense of \eqref{eq:mimo sector bound}. The system is globally exponentially stable if $A+B\Sigma_1C$ is Metzler and $A+B\Sigma_2C$ is Hurwitz.
\end{theorem}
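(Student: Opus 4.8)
The plan is to combine a closed-loop positivity argument with a linear copositive Lyapunov function supplied by Lemma~\ref{lem2}. Before invoking any stability estimate, I would first confirm that $A + B\Sigma_2 C$ inherits the Metzler property. Writing $A + B\Sigma_2 C = (A + B\Sigma_1 C) + B(\Sigma_2 - \Sigma_1)C$ and noting that $B \geq 0$, $C \geq 0$, and $\Sigma_2 - \Sigma_1 \geq 0$ force $B(\Sigma_2-\Sigma_1)C \geq 0$, the off-diagonal entries of $A+B\Sigma_2 C$ are obtained by adding nonnegative quantities to those of the Metzler matrix $A + B\Sigma_1 C$; hence $A + B\Sigma_2 C$ is Metzler. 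Combined with the Hurwitz hypothesis, this means $A + B\Sigma_2 C$ generates a positive, asymptotically stable linear system, so Lemma~\ref{lem2} yields a strictly positive vector $v > 0$ with $v^\top(A + B\Sigma_2 C) < 0$.

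Next I would show that the nonlinear closed loop leaves the positive orthant invariant, so that trajectories from $x_0 \geq 0$ satisfy $x(t) \geq 0$ and the sector bound \eqref{eq:mimo sector bound} is applicable along them (recall $z = Cx \geq 0$ whenever $x \geq 0$, since $C \geq 0$). Decomposing the dynamics as $\dot x = (A + B\Sigma_1 C)x + B(\Phi(Cx,t) - \Sigma_1 Cx)$, on any boundary face $\{x_i = 0,\ x \geq 0\}$ the $i$-th component of the vector field equals $[(A+B\Sigma_1C)x]_i + [B(\Phi(Cx,t)-\Sigma_1 Cx)]_i$; the first term is nonnegative because the Metzler off-diagonals multiply the nonnegative entries $x_j$ (the diagonal term drops out at $x_i=0$), and the second is nonnegative because $B \geq 0$ and the lower sector bound gives $\Phi(Cx,t) - \Sigma_1 Cx \geq 0$. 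This essential-nonnegativity (subtangentiality) condition yields forward invariance of $\mathbb{R}^n_+$ by the standard Nagumo-type argument for positive nonlinear systems.

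With positivity in hand, I would take the linear copositive Lyapunov candidate $V(x) = v^\top x$, which is positive and vanishes only at the origin on $\mathbb{R}^n_+$ since $v > 0$. Differentiating along trajectories and using $v^\top B \geq 0$ together with the upper sector bound $\Phi(Cx,t) \leq \Sigma_2 Cx$ gives
\begin{equation*}
\dot V = v^\top Ax + v^\top B\,\Phi(Cx,t) \leq v^\top(A + B\Sigma_2 C)x =: -w^\top x,
\end{equation*}
where $w := -(A+B\Sigma_2C)^\top v > 0$ by the choice of $v$. Since both $v, w > 0$, on $\mathbb{R}^n_+$ one has $w^\top x \geq \beta\, v^\top x$ with $\beta := (\min_i w_i)/(\max_i v_i) > 0$, so $\dot V \leq -\beta V$ and hence $V(x(t)) \leq e^{-\beta t} V(x_0)$. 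Because $v > 0$ makes $V$ equivalent to a norm on the orthant, this exponential decay of $V$ transfers directly to $\|x(t)\| \leq \kappa e^{-\beta t}\|x_0\|$ for a suitable $\kappa$, establishing global exponential stability over $\mathbb{R}^n_+$.

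I expect the main obstacle to be the positivity/invariance step rather than the Lyapunov estimate: one must argue carefully that the sector bound---valid only for nonnegative arguments---can legitimately be used along the trajectory, which is precisely what the boundary (subtangentiality) analysis secures, closing the apparent circularity between ``the trajectory stays nonnegative'' and ``the sector bound applies.'' An equivalent route would replace the Lyapunov step with a monotone comparison principle, sandwiching $0 \leq x(t) \leq \bar x(t)$ where $\dot{\bar x} = (A+B\Sigma_2 C)\bar x$ is a positive Hurwitz system; I would keep the copositive Lyapunov version as the primary argument since Lemma~\ref{lem2} is tailored to it.
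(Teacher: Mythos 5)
Your proof is correct and takes essentially the same route as the paper, which defers the formal proof of this theorem to \cite{drummond2022aizerman} but displays the identical machinery---a linear copositive Lyapunov function $V(x)=v^\top x$ with $v^\top(A+B\Sigma_2 C)<0$ supplied by Lemma~\ref{lem2}, combined with the upper sector bound to obtain $\dot V \le v^\top(A+B\Sigma_2 C)x$---in its proof of Lemma~\ref{lem:mathmanip} and the remark that follows it. The additional steps you spell out (verifying that $A+B\Sigma_2 C$ inherits the Metzler property so that Lemma~\ref{lem2} applies, and establishing forward invariance of $\mathbb{R}^n_+$ via the Nagumo/subtangentiality argument with the $\Sigma_1$-decomposition, which is precisely the role of the Metzler hypothesis on $A+B\Sigma_1 C$) are details the paper leaves implicit under its standing positivity assumption, so they strengthen rather than alter the argument.
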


With the preliminaries in place, we now proceed to formally define the problem and outline our approach for analyzing the local stability and ROA of nonlinear Lur’e systems.

\subsection{Problem Formulation}
Consider the Lur'e system as defined in \eqref{eq:positiveluresystem}, with $B, C \geq 0,$ and the equilibrium point $x_*$ satisfying $\mathbf{0} = Ax_* + Bu_*$, $y_* = Cx_*$, $ u_* = \Phi(y_*)$.
Let $\chi(t; x_0)$ denote the solution to the closed-loop system at time $t$, initialized at $x_0$. Our goal is to analyze the stability of the equilibrium point $x_*$, and to identify a maximal underapproximation of the set of initial conditions from which the system trajectories converge to this point. This set is commonly referred to as the ROA in nonlinear systems literature and is defined as:
% \begin{equation}
    \(\mathcal{R} := \{ x_0 \in \mathbb{R}^{n} : \lim_{t \to \infty} \chi(t; x_0) = x_* \}.\)
% \end{equation}

In the following sections, we first formulate the localized positive Aizerman conjecture and derive an associated inner approximation of the ROA. Building on this foundation, we then propose two complementary methods to compute inner approximations of ROA for positive Lur’e systems with FFNNs in the feedback loop: (i) a Lyapunov-based approach leveraging locally sector bounded nonlinearity assumptions and LMI optimization, and (ii) a novel NN sector bound, integrated into the framework of the local positive Aizerman conjecture to ensure local stability and estimate the ROA.

% \section{Localized Aizerman Conjecture}

\section{Local Stability via Positive Aizerman Conjecture}
\vspace{-.1cm}We begin by introducing a localized sector condition for nonlinearities, which plays a central role in our analysis.
\begin{definition}[$\Gamma-$Sector Bounded Function]\label{def:gammasectorbound}
    Given an interval $[\Sigma_1, \Sigma_2]$ with $\Sigma_1, \Sigma_2 \in \mathbb{R}^{m\times p}$ and $\Sigma_1 \le \Sigma_2$, a multivariate nonlinear function $\Phi(y, \cdot): \mathbb{R}^p \times \mathbb{R} \to \mathbb{R}^m $ is said to be $\Gamma-$sector bounded within that interval if, for a compact and connected set \( \Gamma \subseteq \mathbb{R}^p \), the following condition holds:
\begin{equation}
    % (\Phi(y, \cdot) - \Sigma_1 y)^\top (\Sigma_2 y - \Phi(y, \cdot)) \geq 0.
    \Sigma_1 y \leq \Phi(y, .) \leq \Sigma_2 y, \quad \forall y \in \Gamma.
\end{equation}
\end{definition}
 Using this definition, the following theorem establishes exponential stability of output trajectories that remain inside the defined $\Gamma$ set.

\begin{theorem}\label{the:mathmanip}
    Consider the Lur’e system given in \eqref{eq:positiveluresystem}, where $B, C \geq 0$ and the nonlinearity $\Phi$ is $\Gamma-$sector bounded within the interval $[\Sigma_1,\Sigma_2]$ with $y_* \in \Gamma$. Suppose that $A+B\Sigma_1C$ is Metzler and $A+B\Sigma_2C$ is Hurwitz; then all the output trajectories that remain within the set $\Gamma$ are exponentially stable.
\end{theorem}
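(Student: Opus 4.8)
The plan is to show that, along any trajectory whose output $y(t)=Cx(t)$ remains in $\Gamma$ for all $t\ge 0$, the localized condition of Definition~\ref{def:gammasectorbound} is indistinguishable from the global sector condition~\eqref{eq:mimo sector bound}, so that the comparison/Lyapunov reasoning behind the global positive Aizerman result (Theorem~\ref{the:positiveaizerman}) can be replayed along that trajectory. Concretely, for such a confined trajectory the inequalities $\Sigma_1 Cx(t)\le \Phi(Cx(t),t)\le \Sigma_2 Cx(t)$ hold for every $t$, and since $B\ge 0$ they may be left-multiplied by $B$ to sandwich the closed-loop vector field between the two linear systems generated by $A+B\Sigma_1 C$ and $A+B\Sigma_2 C$. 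Throughout I work in the regime $x_0\ge 0$ relevant to a positive system in the sense of Definition~\ref{def:positive system}.

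First I would establish that the nonnegative orthant is forward invariant for trajectories confined to $\Gamma$, which is what makes the linear comparison bounds (and a copositive Lyapunov function) valid certificates. Writing $Ax+B\Phi(Cx,t)=(A+B\Sigma_1 C)x+B\big(\Phi(Cx,t)-\Sigma_1 Cx\big)$, I would verify a Nagumo-type boundary condition: at any $x\ge 0$ with $x_i=0$, the first term contributes $\sum_{j\ne i}(A+B\Sigma_1 C)_{ij}x_j\ge 0$ because $A+B\Sigma_1 C$ is Metzler and $x_j\ge 0$, while the second term is nonnegative because $B\ge 0$ and the lower bound $\Phi(Cx,t)\ge \Sigma_1 Cx$ holds on $\Gamma$. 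Hence $\dot x_i\ge 0$ whenever $x_i=0$, so $x(t)\ge 0$ for as long as the trajectory stays in $\Gamma$.

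Next I would introduce the upper comparison system $\dot{\bar x}=(A+B\Sigma_2 C)\bar x$ with $\bar x(0)=x_0$. The matrix $A+B\Sigma_2 C=(A+B\Sigma_1 C)+B(\Sigma_2-\Sigma_1)C$ is Metzler, being a Metzler matrix plus the nonnegative matrix $B(\Sigma_2-\Sigma_1)C$ (since $B,C\ge 0$ and $\Sigma_1\le\Sigma_2$), and by hypothesis it is Hurwitz; it therefore generates an exponentially stable positive linear flow. From $\dot x\le (A+B\Sigma_2 C)x$ together with this cooperative (Metzler) structure, the comparison principle for monotone systems yields $0\le x(t)\le \bar x(t)$ componentwise, and since $\|\bar x(t)\|\le M e^{-\alpha t}\|x_0\|$ for some $M,\alpha>0$, the sandwich gives $\|x(t)\|\le M e^{-\alpha t}\|x_0\|$ in any monotone norm, i.e.\ exponential decay. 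Equivalently, applying Lemma~\ref{lem2} to the Metzler-and-Hurwitz matrix $A+B\Sigma_2 C$ produces $v>0$ with $v^\top(A+B\Sigma_2 C)<0$; the copositive function $V(x)=v^\top x$ then satisfies $\dot V=v^\top A x+v^\top B\Phi(Cx,t)\le v^\top(A+B\Sigma_2 C)x\le -c\,V$ on the nonnegative orthant for some $c>0$, directly certifying the exponential rate.

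The main obstacle is handling the locality of the sector condition correctly: every inequality above is valid only while $y(t)\in\Gamma$, so the argument certifies exponential decay precisely for trajectories that remain in $\Gamma$, as the statement asserts, and the hypothesis $y_*\in\Gamma$ ensures the equilibrium $x_*=0$ lies in the region where the bounds hold. A related subtlety is that the nonnegative term $B\big(\Phi(Cx,t)-\Sigma_1 Cx\big)$ driving the positivity argument is \emph{state-dependent} rather than an exogenous input, which is why I rely on the orthant-invariance (Nagumo) argument and the monotone comparison principle instead of a naive open-loop bound; making these two points rigorous is where the localization genuinely departs from the global proof.
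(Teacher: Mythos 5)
Your proof is correct and takes essentially the same approach as the paper's: the paper omits the detailed argument but, per the remark following Lemma~\ref{lem:mathmanip}, the intended proof is exactly your route---bound the dynamics by $\dot x \le (A+B\Sigma_2 C)x$ using the $\Sigma_2$ sector inequality and $B\ge 0$, invoke Lemma~\ref{lem2} to obtain $v>0$ with $v^\top(A+B\Sigma_2 C)<0$, and certify exponential decay with the copositive Lyapunov function $V(x)=v^\top x$, with closed-loop positivity resting on the Metzler property of $A+B\Sigma_1 C$ as in the global positive Aizerman proof of \cite{drummond2022aizerman}. Your explicit Nagumo-type orthant-invariance step and the monotone comparison alternative are precisely the details the paper defers to that reference, so there is no substantive methodological difference.
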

For the sake of brevity, we omit the full proof here; it follows a similar reasoning to the global positive Aizerman conjecture as presented in \cite{drummond2022aizerman}, with the additional assumption that the system trajectory remains confined within the sector bounded region \( \Gamma \). Following Theorem~\ref{the:mathmanip}, Lemma~\ref{lem:mathmanip} provides a sufficient condition for identifying a subset of initial states that guarantees the output trajectory remains within $\Gamma$ set.

\begin{lemma}\label{lem:mathmanip}
    Consider a multi-input single-output (MISO) Lur'e system satisfying theorem \ref{the:mathmanip}. Let $\overline y$ denote the upper bound
    % \footnote{If the element-wise maximal element doesn't exist, any representtaive $\overline y\in\Gamma$ may be chosen.}
    of the set $\Gamma$ in which the nonlinearity is sector bounded. Then, an underapproximation of the ROA is given by the set of initial conditions $x_0$ satisfying:
    \begin{equation}\label{eq:ROA}
        Cx_0 \leq \frac{v_m}{v_M} \,\overline y,
    \end{equation}
    where $v>0$ satisfies $v^\top(A+B\Sigma_2C)<0$, and $v_m$ and $v_M$ are the smallest and largest element of $v$, in turn.
\end{lemma}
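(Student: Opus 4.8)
The plan is to certify forward invariance of a suitable sublevel set by means of a \emph{linear copositive} comparison function, and then invoke Theorem~\ref{the:mathmanip}. First I would establish existence of the claimed vector $v$: since $A+B\Sigma_1C$ is Metzler, $B,C\ge 0$ and $\Sigma_2\ge\Sigma_1$, the matrix $A+B\Sigma_2C=(A+B\Sigma_1C)+B(\Sigma_2-\Sigma_1)C$ is the sum of a Metzler matrix and a nonnegative matrix, hence itself Metzler; being also Hurwitz by hypothesis, it is the system matrix of an asymptotically stable positive system, so Lemma~\ref{lem2} furnishes a $v>0$ with $v^\top(A+B\Sigma_2C)<0$. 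I would then take $V(x):=v^\top x$, which is nonnegative on the positive orthant in which the closed-loop trajectories evolve (recall $x(t)\ge 0$ for $x_0\ge 0$ by positivity).

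The second step is the decrease estimate. Along any trajectory, and for as long as the output stays in $\Gamma$ so that the sector condition applies, the upper sector bound gives $\Phi(Cx,t)\le\Sigma_2 Cx$; combining this with $v^\top B\ge 0$ (a consequence of $v>0$ and $B\ge 0$) yields
\begin{equation*}
\dot V = v^\top\bigl(Ax+B\Phi(Cx,t)\bigr) \le v^\top(A+B\Sigma_2C)\,x \le 0,
\end{equation*}
where the last inequality uses $x\ge 0$ together with $v^\top(A+B\Sigma_2C)<0$. Thus $V$ is non-increasing, i.e. $v^\top x(t)\le v^\top x_0$, for as long as the trajectory remains output-confined to $\Gamma$.

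The third step converts this into the stated ROA condition through a first-exit-time argument. Using $x\ge 0$ and $v_m\le v_i\le v_M$, I would sandwich $V$ between scaled $\ell_1$ quantities, $v_m\,\mathbf{1}^\top x\le V(x)\le v_M\,\mathbf{1}^\top x$ (with $\mathbf{1}$ the all-ones vector), and relate the scalar MISO output to $\mathbf{1}^\top x$ so as to obtain $Cx(t)\le (v_M/v_m)\,Cx_0$ while the trajectory lies in $\Gamma$. If $x_0$ satisfies $Cx_0\le (v_m/v_M)\,\overline y$, this bound forces $Cx(t)\le\overline y$; since the output is also nonnegative, $y(t)$ can never reach the boundary of $\Gamma$, so by continuity the trajectory never exits $\Gamma$. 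Theorem~\ref{the:mathmanip} then certifies exponential stability of the confined output trajectory, whence $\chi(t;x_0)\to x_*$ and $x_0\in\mathcal R$, establishing \eqref{eq:ROA}.

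The main obstacle is this third step: making the invariance argument rigorous (a clean first-exit / continuity argument showing $\Gamma$ is never left) and, in particular, passing from the level-set bound on $V=v^\top x$ to the scalar output bound with \emph{exactly} the constant $v_m/v_M$. That ratio arises precisely from the two-sided sandwich above together with the MISO output structure; one must verify that the largest value of $Cx$ over the invariant sublevel set $\{x\ge 0:\, v^\top x\le v^\top x_0\}$ does not exceed $\overline y$, which is where the specific form of the bound enters and where any conservatism relative to the exact extremal value is introduced.
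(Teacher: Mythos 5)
Your proposal follows essentially the same route as the paper's own proof: existence of $v$ from Lemma~\ref{lem2} applied to $M=A+B\Sigma_2C$ (your explicit check that $M$ is Metzler, via $M=(A+B\Sigma_1C)+B(\Sigma_2-\Sigma_1)C$, is a detail the paper asserts without justification), the decrease of $V(x)=v^\top x$ along trajectories whose output stays in $\Gamma$, and then a sandwich argument plus confinement to invoke Theorem~\ref{the:mathmanip}. Your insistence on a first-exit-time argument is correct and is the right way to break the apparent circularity (the decrease needs the sector bound, the sector bound needs confinement); it closes cleanly because $V$ is strictly decreasing away from the origin ($v^\top Mx<0$ whenever $x\ge 0$, $x\neq 0$), so a trajectory with $y_0\le (v_m/v_M)\overline y$ cannot reach the level $y=\overline y$.

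The obstacle you flag in your third step, however, is a genuine gap --- and it is worth knowing that the paper's own proof does not resolve it either. The paper writes the chain $v_m x \leq v^\top x \leq v^\top x_0 \leq v_M x_0$ and then ``multiplies by $C$,'' but read element-wise the inequality $v_m x(t)\le v_M x_0$ simply does not follow from $v^\top x(t)\le v^\top x_0$: monotonicity of $v^\top x$ only gives the aggregate bound $v_m\mathbf{1}^\top x(t)\le v_M\mathbf{1}^\top x_0$, and mass may migrate between components of $x$. Passing from this to $Cx(t)\le (v_M/v_m)\,Cx_0$ with \emph{exactly} the constant $v_M/v_m$ works only when the entries of $C$ are all equal, i.e. $C=c\,\mathbf{1}^\top$ (as in the paper's numerical example $C=[1\;\;1]$). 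For general $C\ge 0$ with positive entries, the constant this argument actually yields is $\max_j (C_j/v_j)\,\max_k(v_k/C_k)$, so the certified ROA condition should read $Cx_0 \le \bigl(\min_j (v_j/C_j)/\max_k(v_k/C_k)\bigr)\overline y$, which reduces to $v_m/v_M$ in the equal-entries case but can be arbitrarily smaller otherwise. A concrete failure: take $M=\bigl[\begin{smallmatrix}-2 & 0\\ 1 & -1\end{smallmatrix}\bigr]$, $v=(1,1)^\top$ (so $v_m/v_M=1$), $C=[\epsilon\;\;1]$, and $x_0=(\overline y/\epsilon,0)^\top$; then $Cx_0=\overline y$ satisfies \eqref{eq:ROA}, yet $y(t)\approx \overline y + t\,\overline y/\epsilon$ exceeds $\overline y$ immediately, so the output exits $\Gamma$ and no stability conclusion can be drawn. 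Hence, to finish your proof (and the paper's), you must either assume $C$ has identical entries or replace $v_m/v_M$ by the $C$-weighted ratio above.
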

% \begin{remark}
%     The existence of the vector $v$ is guaranteed by Lemma \ref{lem2}, since $A+B\Sigma_2C$ is Metzler and Hurwitz.
% \end{remark}

\begin{proof}
We can upper bound the system dynamics in \eqref{eq:positiveluresystem} as:
\begin{equation}\label{eq:midcalcul1}
    \dot{x} = (A + B\Sigma_2C)x + B(\Phi(Cx,t) - \Sigma_2Cx) \leq Mx,
\end{equation}
where \( M := A + B \Sigma_2 C \) is Metzler and Hurwitz. By Lemma \ref{lem2}, $\exists v>0$ such that $v^\top M < 0$ and considering the Lyapunov candidate $V(x) = v^\top x$, then its time derivative along system trajectories is:
\begin{equation}\label{eq:midcal2iuse}
        \dot{V}(x) = v^\top \dot{x} \le v^\top Mx.
\end{equation}
Since \( v^\top M < 0 \) and \( x \geq 0 \), we conclude that $\dot{V}(x) \leq 0 $, implying \( v^\top x \) is non-increasing over time. This gives:
\begin{equation}\label{eq:midcalcul3}
    v_m x \leq v^\top x \leq v^\top x_0 \leq v_M x_0.
\end{equation}
Multiplying both sides of $v_m x \le v_M x_0$ by $C$ yields $v_m y \le v_M y_0$. Now, if $v_M y_0 \le v_m \overline y$, then $v_m y \le v_m \overline y$, implying $y \le \overline y$. Therefore, all output trajectories starting from 
\begin{equation}
% \(
    y_0 \le \frac{v_m}{v_M}\,\overline y
% \)
\end{equation}
will remain below the threshold $\overline y$ and remain inside $\Gamma$, thus ensuring asymptotic stability.
\end{proof}
\begin{remark}
    The Lyapunov function presented in the proof of Lemma~\ref{lem:mathmanip} with equations \eqref{eq:midcalcul1} and \eqref{eq:midcal2iuse}, alongside the result of~\cite[Lem. 2]{drummond2022aizerman}, provides a proof for our Theorem~\ref{the:mathmanip}.
\end{remark}

% \begin{remark}
%     This method is an easy way to find the region of attraction, however it relies on knowing the $\Gamma$ set in which the nonlinearity is $\Gamma-$sector bounded within the desired $[\Sigma_1,\Sigma_2]$. The $\Gamma$ set is easily numerically calaulated for a given nonlinearity, with lipschitz restrictions on the nonlinearity. An algorithm for finding the $\Gamma$ is presented in the appendix.
% \end{remark}

\section{Lyapunov Based Method}
An alternative approach to Lemma \ref{lem:mathmanip} involves the construction of Lyapunov functions benefitting from sector bounded nonlinearity constraints. The following result establishes a verified quadratic Lyapunov function for general sector bounded systems.

\begin{theorem}[Quadratic certificate]\label{the:lyapaizerman}
For the Lur’e system \eqref{eq:positiveluresystem} with \(B,C\ge 0\) and a nonlinearity \(\Phi\) that is \(\Gamma\)-sector bounded on \([\Sigma_1,\Sigma_2]\),
if \(A+B\Sigma_1C\) is Metzler and \(A+B\Sigma_2C\) is Hurwitz, then there exists a \emph{doubly positive}
\(P\in\mathbb{R}^{n\times n}\) such that
\begin{align}
    (A + B\Sigma_2 C)^\top P + P(A + B\Sigma_2 C) \prec 0. \label{eq:condit3}
\end{align}
Consequently, \(V(x)=x^\top P x\) is a Lyapunov function certifying local stability; moreover, for some \(\alpha>0\),
the sublevel set \(\mathcal{V}_\alpha=\{x: V(x)\le\alpha\}\) is compact, positively invariant, and forms an inner
approximation of the ROA.
\end{theorem}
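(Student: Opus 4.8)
The plan is to establish the three assertions in sequence: existence of a doubly positive solution $P$ to \eqref{eq:condit3}, strict negativity of $\dot V$ along closed-loop trajectories, and the invariance and containment properties of the sublevel set. The starting observation is that $M := A + B\Sigma_2 C$ is simultaneously Metzler and Hurwitz. It is Hurwitz by hypothesis, and writing $M = (A + B\Sigma_1 C) + B(\Sigma_2 - \Sigma_1)C$ shows it is Metzler: $A + B\Sigma_1 C$ is Metzler, and the correction $B(\Sigma_2 - \Sigma_1)C \ge 0$ (using $B,C\ge 0$ and $\Sigma_2\ge\Sigma_1$) only adds to the off-diagonal entries.

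For the existence of $P$, I would invoke the classical diagonal-stability characterization of Metzler Hurwitz matrices: such a matrix is Hurwitz if and only if it is diagonally stable, i.e. there is a diagonal $D\succ 0$ with $M^\top D + DM \prec 0$. Taking $P = D$ settles \eqref{eq:condit3}, and a diagonal matrix with strictly positive diagonal is both positive definite and element-wise nonnegative, hence doubly positive---exactly the property needed below. I expect this to be the main obstacle, in the sense that a generic $P\succ 0$ from the Lyapunov equation need not be element-wise nonnegative; it is the extra Metzler structure of $M$ that buys the doubly positive certificate.

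Next, for $\dot V$, I would decompose the dynamics as $\dot x = Mx + B\psi$ with $\psi := \Phi(Cx,t) - \Sigma_2 C x$, so that
\begin{equation}
    \dot V = x^\top\!\big(M^\top P + PM\big)x + 2\,x^\top P B\,\psi .
\end{equation}
The quadratic term is strictly negative for $x\neq 0$ by \eqref{eq:condit3}. For the cross term I would restrict to the positive orthant: trajectories with $x_0\ge 0$ remain nonnegative, since the lower sector bound gives $\dot x \ge (A+B\Sigma_1 C)x$ with $A+B\Sigma_1 C$ Metzler, whence a comparison argument yields $x(t)\ge 0$. The $\Gamma$-sector bound then gives $\psi\le 0$ whenever $Cx\in\Gamma$, while $x\ge 0$, $P>0$, and $B\ge 0$ give $x^\top P B \ge 0$ entrywise; hence $2\,x^\top P B\,\psi \le 0$. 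Therefore $\dot V < 0$ for every nonzero $x\ge 0$ with $Cx\in\Gamma$. This sign analysis is the second critical ingredient, and it is precisely where double positivity of $P$ is used.

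Finally, since $P\succ 0$, every set $\mathcal{V}_\alpha = \{x: x^\top P x\le\alpha\}$ is a bounded ellipsoid, hence compact. Because $\Gamma$ is compact and contains $y_*=Cx_*$ and $C$ is continuous, I would choose $\alpha>0$ small enough that $\mathcal{V}_\alpha\cap\mathbb{R}^n_+$ is mapped by $C$ into $\Gamma$; then $\dot V<0$ on $\mathcal{V}_\alpha\cap\mathbb{R}^n_+$ away from the origin. Positive invariance follows because $V$ is strictly decreasing there, so a trajectory cannot cross the level $\{V=\alpha\}$ outward, and the standard Lyapunov argument then yields convergence to $x_*$ for every initial condition in this region, establishing it as an inner approximation of the ROA.
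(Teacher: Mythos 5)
Your proposal is correct, and its core Lyapunov computation coincides with the paper's own proof: the same candidate \(V(x)=x^\top Px\), the same use of the upper sector bound together with element-wise nonnegativity of \(x\), \(P\), and \(B\) to dispose of the cross term (your inequality \(2x^\top PB\,\psi\le 0\) with \(\psi=\Phi(Cx,t)-\Sigma_2Cx\) is algebraically identical to the paper's step \(2x^\top PB\,\Phi(Cx,\cdot)\le 2x^\top PB\,\Sigma_2Cx\)), and the same substitution into \eqref{eq:condit3} to conclude \(\dot V<0\). Where you genuinely go beyond the paper is the existence claim: the paper's proof simply \emph{posits} a doubly positive \(P\) satisfying the LMI and never shows one exists, whereas you derive it from the diagonal-stability characterization of Metzler Hurwitz matrices, after first verifying that \(M=A+B\Sigma_2C\) inherits the Metzler property via \(M=(A+B\Sigma_1C)+B(\Sigma_2-\Sigma_1)C\) with \(B(\Sigma_2-\Sigma_1)C\ge 0\). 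That is exactly the right lemma to invoke, and it is what makes the ``doubly positive'' assertion non-vacuous: a generic solution \(P\succ 0\) of a Lyapunov inequality for a Hurwitz matrix need not be entrywise nonnegative, so the Metzler structure is doing real work. You also supply two further ingredients the paper uses only implicitly: forward invariance of the nonnegative orthant (via the lower sector bound and the Metzler property of \(A+B\Sigma_1C\)), which is logically required before one may assume \(x\ge 0\) in the sign analysis; and the explicit choice of \(\alpha\) small enough that \(C\) maps \(\mathcal{V}_\alpha\cap\mathbb{R}^n_+\) into \(\Gamma\), which is what converts the pointwise inequality \(\dot V<0\) into positive invariance and an ROA inner approximation---claims the paper states in the theorem but relegates to its post-proof discussion of sublevel sets \(\mathcal{L}_\rho\) and Algorithm~\ref{alg:lyapunov_roa} rather than proving. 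In short, your argument is the paper's argument made complete.
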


% \begin{theorem}\label{the:lyapaizerman}
% Consider the Lur’e system formulated in \eqref{eq:positiveluresystem}, with \( B, C \ge 0 \). Let the nonlinearity \( \Phi \) be \(\Gamma\)-sector bounded over the interval \([\Sigma_1, \Sigma_2]\). If the matrix \( A + B\Sigma_1 C \) is Metzler and \( A + B\Sigma_2 C \) is Hurwitz, then there exists a Lyapunov function that certifies the local stability of the nonlinear system. Specifically, a verified Lyapunov function is given by \( V(x) = x^\top P x \), where \( P \in \mathbb{R}^{n \times n} \) is a \textit{doubly positive} matrix satisfying the linear matrix inequality:
% \begin{align}
%     (A + B\Sigma_2 C)^\top P + P(A + B\Sigma_2 C) \prec 0. \label{eq:condit3}
% \end{align}

Moreover, the sublevel sets of \( V \) define compact, positively invariant regions around the origin that constitute an inner approximation of the ROA for the nonlinear system. That is, there exists \( \alpha > 0 \) such that the set \( \mathcal{V}_\alpha := \{ x \in \mathbb{R}^n \mid V(x) \leq \alpha \} \) is entirely contained within the ROA.
% \end{theorem}

\begin{proof}
Consider the quadratic Lyapunov candidate \( V(x) = x^\top P x \), where \( P \) is a doubly positive matrix satisfying the LMI in \eqref{eq:condit3}.
% Such a matrix \( P \) exists due to the Hurwitz nature of the matrix \( A + B\Sigma_2 C \).
Define the time derivative of \( V \) along system trajectories:
\begin{equation}\label{eq:vdotoflyap}
    \dot{V}(x) = x^\top (A^\top P + P A) x + 2 x^\top P B \Phi(Cx,.).
\end{equation}

By the assumption that the nonlinearity \( \Phi \) is \(\Gamma-\)sector bounded in the interval \([\Sigma_1, \Sigma_2]\), we obtain the inequality:
\begin{equation}\label{eq:ineq_phi}
    \Phi(Cx,.) \leq \Sigma_2 Cx, \quad \forall x \in \mathcal{D},
\end{equation}
for some neighborhood \(\mathcal{D}\) of the origin. Noting that \( B \), \( P \), and \( x \) are all nonnegative element-wise and using the monotonicity of scalar multiplication under the element-wise inequality \eqref{eq:ineq_phi}, we deduce:
\begin{equation}\small
    2 x^\top P B \Phi(Cx,.) \leq 2 x^\top P B \Sigma_2 C x.
\end{equation}

Substituting this into the expression for \( \dot{V}(x) \), we obtain:
\begin{align}\small
    \dot{V}(x) &\leq x^\top (A^\top P + P A) x + 2 x^\top P B \Sigma_2 C x \nonumber \\
    &= x^\top \left[ (A + B \Sigma_2 C)^\top P + P (A + B \Sigma_2 C) \right] x.
\end{align}

By construction of \( P \), the matrix inequality \eqref{eq:condit3} implies that the right-hand side is strictly negative. Hence, \( \dot{V}(x) < 0 \) for all \( x \neq 0 \) in some neighborhood of the origin, completing the proof of local asymptotic stability.
\end{proof}

Building upon the stability guarantee provided in Theorem~\ref{the:lyapaizerman}, we now proceed to define an inner approximation of the ROA for the nonlinear system. It is important to note that, in general, one cannot assert that the entire sector bounded domain \( \Gamma \) corresponds to the ROA. This is due to the possibility that certain trajectories originating within \( \Gamma \) may exit the sector before asymptotically approaching the equilibrium point, thereby violating the required sector condition \( \Sigma_1 y \leq \Phi(y,.) \leq \Sigma_2 y \) for all \( t \geq 0 \).

The identification of a subset of \( \Gamma \) that remains forward-invariant under the flow and for which the sector condition is preserved globally in time is, in general, a nontrivial task, as it necessitates precise characterization of invariant sets embedded within \( \Gamma \). To circumvent this difficulty, we leverage the candidate Lyapunov function proposed in Theorem~\ref{the:lyapaizerman}, and construct a Lyapunov-based estimate of the ROA via its sublevel sets.

Specifically, we define quadratic sublevel sets of the form
\(
    \mathcal{L}_\rho \coloneqq \left\{ x \in \mathbb{R}^n \,\middle|\, x^\top P x \leq \rho \right\},
\)
where \( P \) is a doubly positive matrix satisfying the LMI condition in \eqref{eq:condit3}. Instead of exhaustively verifying the sector condition for all trajectories in \( \Gamma \), we restrict our attention to determining whether the Lyapunov function decreases along system trajectories within a candidate sublevel set. In particular, we evaluate the condition
\begin{equation}\label{eq:vdotnn}\small
    \dot{V}(x) = x^\top (A^\top P + P A) x + 2 x^\top P B \Phi(Cx,.) < 0,
\end{equation}
on the boundary of \( \mathcal{L}_\rho \), i.e., for all \( x \in \partial \mathcal{L}_\rho \). If \eqref{eq:vdotnn} holds uniformly on the boundary, then the sublevel set \( \mathcal{L}_\rho \) is forward-invariant and contained within the ROA. Consequently, the largest value of \( \rho \) for which this condition is satisfied defines the maximal certified invariant level set, which we take as an estimate of the ROA. This procedure is formalized in Algorithm~\ref{alg:lyapunov_roa}.

\begin{remark}
While this work primarily utilizes a quadratic Lyapunov function of the form \( V(x) = x^\top P x \) to certify local stability and construct ROA estimates, the proposed methodology can also be applied using a linear copositive Lyapunov function \( V(x) = v^\top x \), where \( v > 0 \). In particular, under the same sector bounded conditions, if \( A + B\Sigma_1 C \) is Metzler and \( A + B\Sigma_2 C \) is Hurwitz, then there definitely exists a vector \( v \) satisfying the inequality
\(
(A + B\Sigma_2 C)^\top v < 0,
\)
which guarantees local asymptotic stability in the positive orthant. The existence of such $v$ is guaranteed by Lemma~\ref{lem2}. This linear formulation offers a computationally efficient alternative, albeit often yielding more conservative ROA estimates than the quadratic case.
\end{remark}
\begin{algorithm}[t!]
\caption{ROA Estimation via Lyapunov Sublevel Sets for Lur'e Systems}
\label{alg:lyapunov_roa}
{\small\begin{algorithmic}[1]
\Require Lur'e system matrices $A, B > 0, C > 0$, nonlinearity $\Phi(y,.)$ $\Gamma$-sector bounded in $[\Sigma_1, \Sigma_2]$
\Ensure Estimate of ROA
\State \textbf{Check Sector Conditions:}
\If{$A + B\Sigma_1C$ is Metzler and $A + B\Sigma_2C$ is Hurwitz}
    \State Proceed to Lyapunov analysis
\Else
    \State \textbf{Exit:} Sector conditions not satisfied
\EndIf
\State \textbf{Solve LMI:} Find $P \succ 0$, $P > 0$ satisfying
\[
(A + B\Sigma_2C)^\top P + P(A + B\Sigma_2C) \prec 0
\]
\If{no such $P$ exists}
    \State \textbf{Exit:} No suitable Lyapunov function found
\EndIf
\State \textbf{Define Lyapunov Function:} $V(x) = x^\top P x$
\State \textbf{Initialize} $\rho_{\text{max}} \gets 0$
\For{$\rho$ in increasing values}
    \State Define sublevel set $\mathcal{L}_\rho = \{x \mid V(x) \leq \rho\}$
    \State Sample points $x$ on the boundary $\partial \mathcal{L}_\rho$
    \For{each $x$ on $\partial \mathcal{L}_\rho$}
        \State Compute $\dot{V}(x)$ using 
        \[
        \dot{V}(x) = x^\top (A^\top P + P A) x + 2x^\top P B \Phi(Cx,.)
        \]
        \If{$\dot{V}(x) \geq 0$}
            \State \textbf{Break}: Sublevel set not invariant
        \EndIf
    \EndFor
    \If{$\dot{V}(x) < 0$ for all $x$ on $\partial \mathcal{L}_\rho$}
        \State Update $\rho_{\text{max}} \gets \rho$
    \Else
        \State \textbf{Break}: Maximum invariant sublevel set found
    \EndIf
\EndFor
\State \Return $\mathcal{L}_{\rho_{\text{max}}}$ as an estimate of ROA
\end{algorithmic}}
\end{algorithm}

This method is applicable to general nonlinearities, including NNs. It significantly reduces the search space for ROA from arbitrary sets to ellipsoidal regions aligned with Lyapunov levels. By focusing on invariant sublevel sets rather than the global sector region $\Gamma$, the approach mitigates the conservatism typically associated with worst-case sector analysis and provides a practically verifiable estimate of the region of attraction.

\section{Local Sector Bounds for FFNN}
Given that the positive Aizerman conjecture fundamentally relies on the existence of sector bounds for the system nonlinearity, we begin by introducing a novel formulation of local sector bounds for NNs specifically tailored to the stability analysis theorems. To the best of our knowledge, this is the first sector based characterization of its kind within the machine learning literature.

Relative to other available bounds—norm products \cite{szegedy2013intriguing}, CROWN \cite{wang2021beta}, RecurJac \cite{zhang2019recurjac}, and local Lipschitz estimates \cite{shi2022efficiently}—our construction (i) is tighter than norm-based bounds, (ii) avoids CROWN’s affine offsets, aligning with the linear sector condition in the Aizerman framework, and (iii) fixes sector slopes at the origin rather than taking derivative of the activation function at boundaries of pre-activation logits, cutting computational overhead while preserving fidelity near equilibrium. The method scales to high-dimensional networks without symbolic or gradient calculations. We focus on fully connected FFNNs without biases, although these constraints can be relaxed with a straightforward extension of the method. The network structure is defined as follows:

% While related bounds have been explored in the context of robustness and verification—such as the product of weight norms \cite{szegedy2013intriguing}, the CROWN framework \cite{wang2021beta}, RecurJac \cite{zhang2019recurjac}, and local Lipschitz estimates \cite{shi2022efficiently}—our proposed method offers several key advantages. Unlike the norm-based upper bounds in \cite{szegedy2013intriguing}, our construction yields significantly tighter enclosures. Moreover, in contrast to CROWN \cite{wang2021beta}, our formulation avoids the inclusion of affine offset terms, making it particularly compatible with the linear sector condition required in the Aizerman conjecture framework.

% Furthermore, compared to Lipschitz-based relaxations \cite{shi2022efficiently, zhang2019recurjac}, our approach fixes the slope of the linear relaxation of the activations at the origin—rather than taking the derivative of the activation function at the boundaries of pre-activation logits—thereby reducing computational overhead while preserving fidelity near the equilibrium. This property significantly enhances scalability, allowing the technique to be applied efficiently to high-dimensional networks without the need for costly symbolic or gradient-based computations.

% We consider fully connected FFNNs without biases, although the proposed approach is readily extendable to non-fully connected architectures with minor modifications. The network structure is defined as follows:

\vspace{-.3cm}
\begin{subequations}\label{eq:NNcontroller}\small
\begin{align}
    &\omega^{(0)} = y,\\
    &\nu^{(i)} = W^{(i)}\omega^{(i-1)},\quad \omega^{(i)} = \phi^{(i)}(\nu^{(i)}), \quad i=1,\dots,q \label{eq:qthlayer}\\
    % &\omega^{(i)} = \phi^{(i)}(\nu^{(i)}), & i=1,\dots,q \label{eq:qthlayer}\\
    &u = W^{(q+1)}\omega^{(q)},
\end{align}
\end{subequations}
where the input is denoted by \( y \in \mathbb{R}_+^{n_0} \), and the output of the \( i \)-th layer (pre-activation) is given by \( \nu^{(i)} \in \mathbb{R}^{n_i} \) for \( i = 1, \dots, q \). Moreover, \( W^{(i)} \in \mathbb{R}^{n_i \times n_{i-1}} \) is the weight matrix at layer \( i \), and \( \phi^{(i)}(\cdot) \) denotes the element-wise activation function applied at that layer.

To compute the local sector bounds, we employ \textit{interval-based nonaffine linear relaxations} of the activation functions, resulting in linear upper and lower approximations of their nonlinear mappings. These linear bounds are then \textit{forward propagated} through the weight matrices at each layer. This process maintains a linear relation between the input and the propagated bounds at every stage of the network.

% As a result, the overall NN mapping satisfies the following sector inequality for all \( y \in \Gamma \):
% \begin{align}
%     \gamma_1 y \leq \mathrm{NN}(y) \leq \gamma_2 y,
% \end{align}
% where \( \gamma_1 \) and \( \gamma_2 \) denote the final slope bounds obtained after propagating the linear relaxations through each layer.

We start from sector bounding the first layer. For the first layer, the pre-activation logits  $(\nu^{(1)} = W^{(1)}y)$ are trivially sector bounded in:
\begin{equation}\label{eq:bound1}\small
   W^{(1)} y \leq\nu^{(1)}\leq W^{(1)} y.
\end{equation}

Now we propagate this sector bound through the activation function. An explanation of this stage is given in the following subsection.

\subsection{Propagation of Sector Bounds through Activation Functions}\label{subsec:linearrelxation}
To carry out the linear relaxations for the activation functions, we first need to compute intervals for the pre-activation logits \( \nu^{(i)} \in \mathbb{R}^{n_i} \) at each hidden layer, denoted by \( \underline{\nu}^{(i)}, \overline{\nu}^{(i)} \in \mathbb{R}^{n_i} \). These intervals serve as the foundation for constructing valid linear relaxations of the activation functions. The full procedure for calculating these bounds across the network is summarized in part \ref{subsec:preactivation}, with additional technical details available in \cite{gowal2018effectiveness}. Given the pre-activation logits calculated from part \ref{subsec:preactivation}, we propagate sector bounds through the activation function in part \ref{subsubsec:tanhrelax}.

\subsubsection{Computation of pre-Activation logits}\label{subsec:preactivation}
Consider a fully connected FFNN described in \eqref{eq:NNcontroller}. The pre-activation bounds are computed recursively using interval arithmetic as follows.

Assuming the input \( y \in \Gamma \subseteq \mathbb{R}^{n_0} \) is bounded component-wise by \( \underline{y} \leq y \leq \overline{y} \), the input bounds propagate through the first layer using:
\begin{equation}
    \underline{\nu}^{(1)} = \min_{y \in [\underline{y}, \overline{y}]} W^{(1)} y, \quad
    \overline{\nu}^{(1)} = \max_{y \in [\underline{y}, \overline{y}]} W^{(1)} y.
\end{equation}
Since the weights are fixed, these expressions reduce to standard interval linear operations. Specifically, for each component of $\nu^{(1)}$:
\begin{equation}\small
    \underline{\nu}_j ^{(1)} = \sum_{k=1}^{n_0} \min\left\{ W^{(1)}_{j k} \underline{y}_k, W^{(1)}_{j k} \overline{y}_k \right\},\end{equation}
\begin{equation}\small
    \overline{\nu}_j ^{(1)} = \sum_{k=1}^{n_0} \max\left\{ W^{(1)}_{j k} \underline{y}_k, W^{(1)}_{j k} \overline{y}_k \right\}.
\end{equation}

For subsequent layers \( i \geq 2 \), we apply the same logic to the output bounds of the previous layer's post-activation:
\begin{align}\small
    \underline{z}^{(i-1)} &= \phi^{(i-1)}(\underline{\nu}^{(i-1)}), \quad 
    \overline{z}^{(i-1)} = \phi^{(i-1)}(\overline{\nu}^{(i-1)}), \\
    \underline{\nu}^{(i)} &= \min_{z \in [\underline{z}^{(i-1)}, \overline{z}^{(i-1)}]} W^{(i)} z, \\
    \overline{\nu}^{(i)} &= \max_{z \in [\underline{z}^{(i-1)}, \overline{z}^{(i-1)}]} W^{(i)} z.
\end{align}
These recursive bounds can be computed efficiently, and they enable us to define the domain over which linear relaxations of activation functions are constructed. The resulting pre-activation intervals \( [\underline{\nu}^{(i)}, \overline{\nu}^{(i)}] \) are then used to form upper and lower linear envelopes for each activation function.

\begin{figure*}[t]
    \centering
    \begin{subfigure}[t]{0.32\textwidth}
        \centering
        \includegraphics[width=\textwidth]{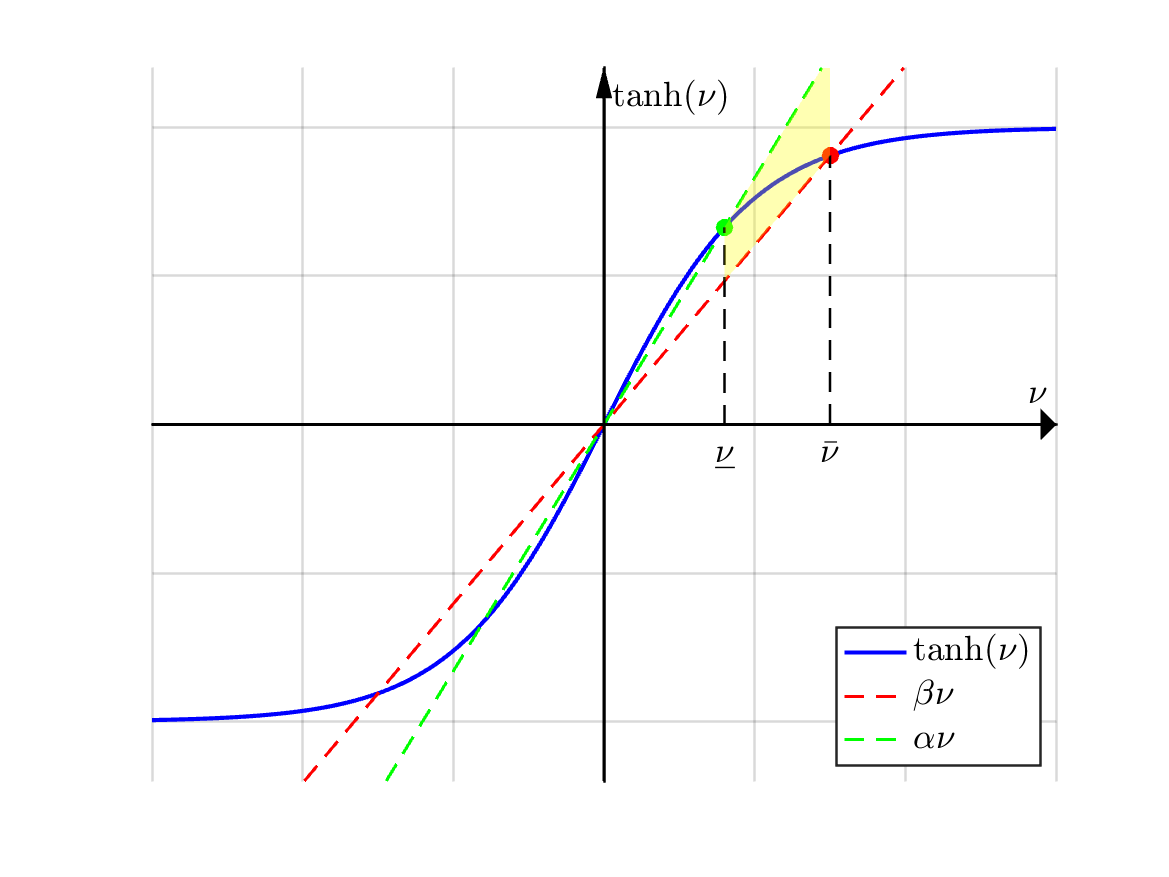}
        \caption{ Positive $\nu$: $\beta = \frac{\tanh(\overline{\nu})}{\overline{\nu}},\alpha = \frac{\tanh(\underline{\nu})}{\underline{\nu}},$ \\ \centering$\beta \underline{\nu} \le\tanh(\nu)\le \alpha \overline{\nu}$}
        \label{fig:posnu}
    \end{subfigure}
    % \hfill
    \begin{subfigure}[t]{0.32\textwidth}
        \centering
        \includegraphics[width=\textwidth]{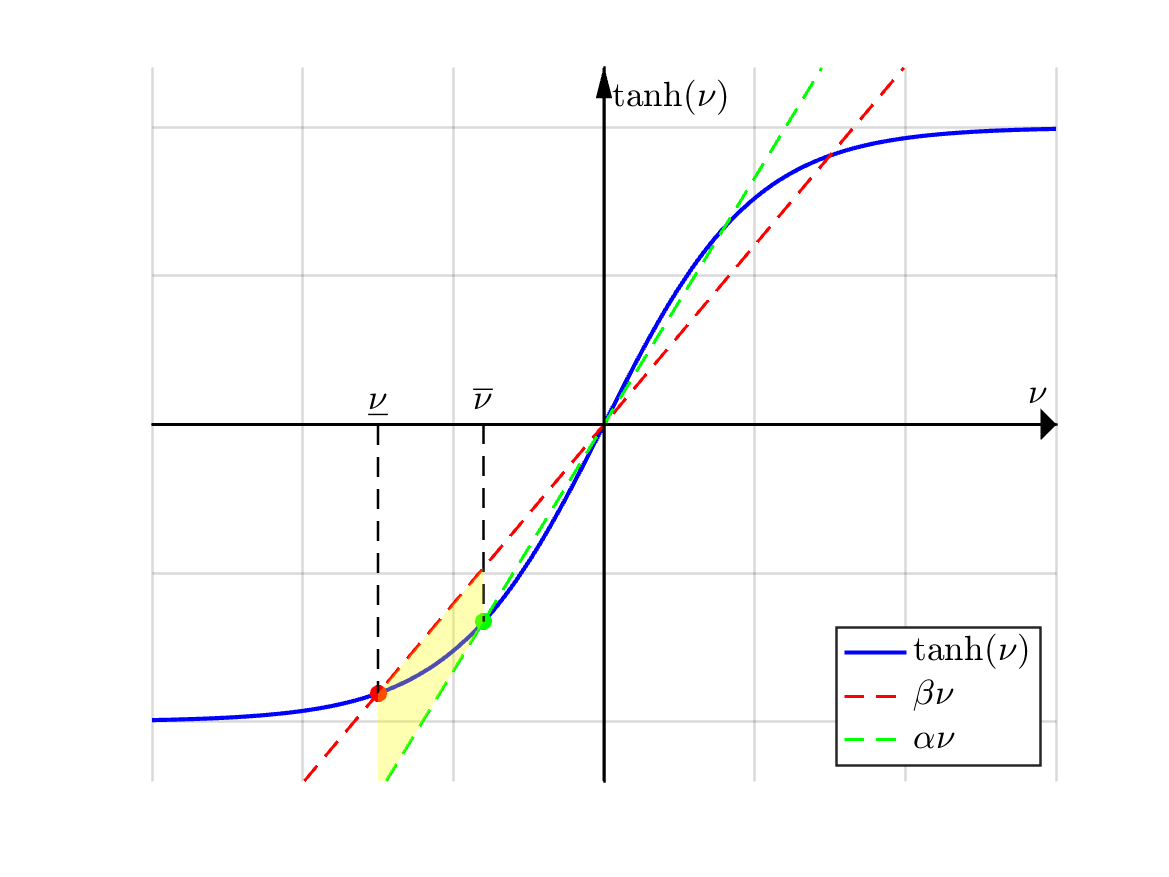}
        \caption{Negative $\nu$: $\beta = \frac{\tanh(\underline{\nu})}{\underline{\nu}},\alpha = \frac{\tanh(\overline{\nu})}{\overline{\nu}},$ \\ \centering$\alpha \underline{\nu} \le\tanh(\nu)\le \beta \overline{\nu}$}
        \label{fig:negnu}
    \end{subfigure}
    % \hfill
    \begin{subfigure}[t]{0.32\textwidth}
        \centering
        \includegraphics[width=\textwidth]{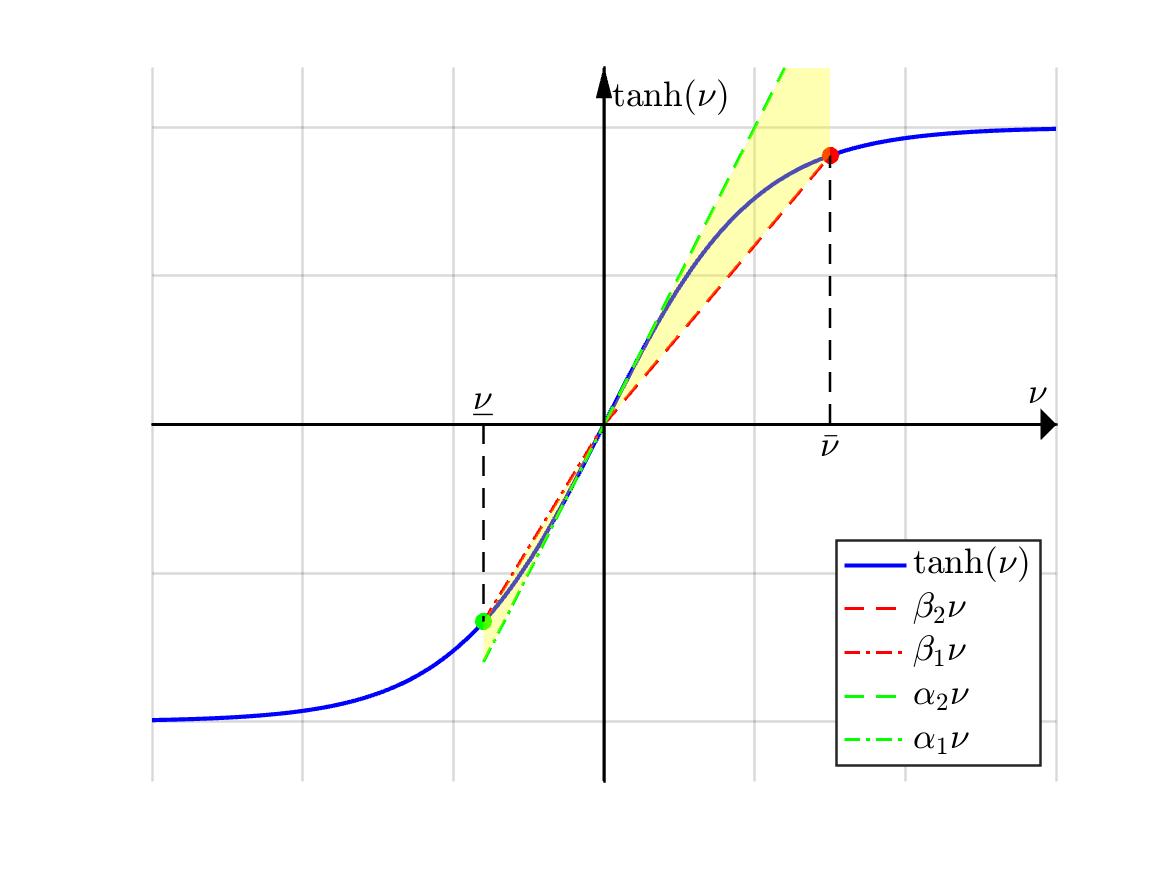}
        \caption{Unstable $\nu$: $\alpha_1 = \alpha_2 = \frac{d\tanh(\nu)}{d(\nu)} = 1,$\\ \centering$\beta_1 = \frac{\tanh(\underline{\nu})}{\underline{\nu}},\beta_2 = \frac{\tanh(\overline{\nu})}{\overline{\nu}},$ \\ \centering $\alpha_1\underline{\nu} \le \tanh{(\nu)} \le \alpha_2 \overline{\nu}$}
        \label{fig:unstablenode}
    \end{subfigure}
    \vspace{-.2cm}
    \caption{\small Illustration of linear relaxation of $\phi = \tanh(\nu)$ under different intervals for $\nu$.\vspace{-.6cm}}
    \label{fig:top_three_images}
\end{figure*}
% \vspace{.3cm}
With the pre-activation logit bounds computed, we are now equipped to propagate sector bounds through activation functions, as shown in the following part.
\subsubsection{Sector relaxation of activation functions}\label{subsubsec:tanhrelax}
To obtain a linear sector representation of the activation outputs \( \omega^{(i)} = \phi(\nu^{(i)}) \), we apply linear relaxation techniques to the nonlinear activation function.
% Given the pre-activation bounds \( \nu^{(i)} \in [\underline{\nu}^{(i)}, \overline{\nu}^{(i)}] \) computed in Subsection~\ref{subsec:pre-activation}, and 
Assume that the pre-activation logits are sector bounded in
\begin{equation}\label{eq:firstoftanh}
 L^{(i)} y \leq \nu^{(i)} \leq U^{(i)} y
 \end{equation}
We readily have this assumption met for the first layer as shown in \eqref{eq:bound1}.
We now derive the corresponding sector bounds for the post-activation output \( \omega^{(i)} \).

Consider a scalar neuron \( j \) at layer \( i \), where the activation output is given by \( \omega_j^{(i)} = \phi(\nu_j^{(i)}) \). For concreteness, we take \( \phi = \tanh \), though the method applies to any monotonic sector bounded activation. Figure~\ref{fig:top_three_images} illustrates the operations explained in this subsection. We aim to linearly upper and lower bound \( \phi(\nu_j^{(i)}) \) over its input interval using slope parameters:
\begin{equation}
    l_j^{(i)} y \leq \phi(\nu_j^{(i)}) \leq u_j^{(i)} y.
\end{equation}
The coefficients \( l_j^{(i)} \) and \( u_j^{(i)} \) are chosen based on the signs of \( \underline{\nu}_j^{(i)} \) and \( \overline{\nu}_j^{(i)} \), as follows:
\begin{equation}\label{eq:casesalphabeta}\small
\begin{cases}
    \beta_j^{(i)} \underline{\nu}_j^{(i)} \leq \phi(\nu_j^{(i)}) \leq \alpha_j^{(i)} \overline{\nu}_j^{(i)} & \text{if } \underline{\nu}_j^{(i)} \geq 0,\ \overline{\nu}_j^{(i)} > 0, \\
    \alpha_j^{(i)} \underline{\nu}_j^{(i)} \leq \phi(\nu_j^{(i)}) \leq \beta_j^{(i)} \overline{\nu}_j^{(i)} & \text{if } \underline{\nu}_j^{(i)} < 0,\ \overline{\nu}_j^{(i)} \leq 0, \\
    -|\alpha_{1,j}^{(i)} \underline{\nu}_j^{(i)}| \leq \phi(\nu_j^{(i)}) \leq |\alpha_{2,j}^{(i)} \overline{\nu}_j^{(i)}| & \text{if } \underline{\nu}_j^{(i)} < 0,\ \overline{\nu}_j^{(i)} > 0.
\end{cases}
\end{equation}

Each of the above cases corresponds to Figures \ref{fig:posnu}, \ref{fig:negnu}, and \ref{fig:unstablenode} with the parameters $\alpha$ and $\beta$ defined in the captions. The first two cases are handled by directly replacing \( \underline{\nu}_j^{(i)} \) and \( \overline{\nu}_j^{(i)} \) with the sector expressions in \eqref{eq:firstoftanh}:
\begin{equation}\label{eq:replacements}\small
\underline{\nu}_j^{(i)} = L_{j,:}^{(i)} y \quad \overline{\nu}_j^{(i)} = U_{j,:}^{(i)} y,
\end{equation}
respectively, where $L_{j,:}$ denotes the $j$-th row of matrix $L$. For the third case (crossing zero), we apply the Cauchy–Schwarz inequality:
\begin{equation}\small
    -|\alpha_{1,j}^{(i)}|\, |\underline{\nu}_j^{(i)}| \leq \phi(\nu_j^{(i)}) \leq |\alpha_{2,j}^{(i)}|\, |\overline{\nu}_j^{(i)}|.
\end{equation}
Now, we replace $\underline{\nu}_j^{(i)}$ and ${\overline{\nu}}_j^{(i)}$ by $L_{j,:}^{(i)}\,y$ and $U_{j,:}^{(i)}\,y$ respectively, and since \( y \geq 0 \) for positive systems, we can move the vector \( y \) outside the absolute value, yielding:
\begin{equation}\label{eq:thirdcasealphabeta}\small
    -|\alpha_{1,j}^{(i)}|\, |L_{j,:}^{(i)}| y \leq \phi(\nu_j^{(i)}) \leq |\alpha_{2,j}^{(i)}|\, |U_{j,:}^{(i)}| y.
\end{equation}

We now combine equations \eqref{eq:casesalphabeta}, \ref{eq:replacements}, and \eqref{eq:thirdcasealphabeta} to form a unified equation. Let \( D^{(i)}_{\text{low}}, D^{(i)}_{\text{up}} \in \mathbb{R}^{n_i \times n_i} \) be diagonal matrices containing the slope parameters and define adjusted slope matrices \( \hat{L}^{(i)}, \hat{U}^{(i)} \in \mathbb{R}^{n_i \times n_0} \). The post-activation output is sector bounded as:
\begin{equation}\label{eq:DandL}\small
    D^{(i)}_{\text{low}}\, \hat{L}^{(i)} y \leq \phi(\nu^{(i)}) \leq D^{(i)}_{\text{up}}\, \hat{U}^{(i)} y,
\end{equation}
where:
\begin{align}\label{case:cases}
\small
\begin{cases}
    D^{(i)}_{\text{low},jj} = \beta_j^{(i)},\quad D^{(i)}_{\text{up},jj} = \alpha_j^{(i)}, \\
    \hat{L}_{j,:}^{(i)} = L_{j,:}^{(i)},\quad \hat{U}_{j,:}^{(i)} = U_{j,:}^{(i)}
    & \text{if } \underline{\nu}_j^{(i)} \geq 0,\ \overline{\nu}_j^{(i)} > 0, \\[1ex]
    D^{(i)}_{\text{low},jj} = \alpha_j^{(i)},\quad D^{(i)}_{\text{up},jj} = \beta_j^{(i)}, \\
    \hat{L}_{j,:}^{(i)} = L_{j,:}^{(i)},\quad \hat{U}_{j,:}^{(i)} = U_{j,:}^{(i)}
    & \text{if } \underline{\nu}_j^{(i)} < 0,\ \overline{\nu}_j^{(i)} \leq 0, \\[1ex]
    D^{(i)}_{\text{low},jj} = -|\alpha_{1,j}^{(i)}|,\quad D^{(i)}_{\text{up},jj} = |\alpha_{2,j}^{(i)}|, \\
    \hat{L}^{(i)}_{j,:} = |L^{(i)}_{j,:}|,\quad \hat{U}^{(i)}_{j,:} = |U^{(i)}_{j,:}|
    & \text{if } \underline{\nu}_j^{(i)} < 0,\ \overline{\nu}_j^{(i)} > 0.
\end{cases}
\end{align}
% \vspace{-.1cm}

The construction in \eqref{eq:DandL} enables the propagation of sector bounds through the activation function.
In the next step, we propagate these sector bounds through weight matrices.

\subsection{Propagation of Sector Bounds through Weight Matrices}\label{subsec:lintrans}
At this step, the bounds are passed through the weight matrices to obtain updated sector bounds. From the last subsection, we can rewrite equation \eqref{eq:DandL} as below:
\begin{equation*}\label{eq:startWs}
    l^{(i)} y \leq \omega^{(i)} \leq u^{(i)} y, \quad
% \[
l^{(i)} = D^{(i)}_{\text{low}}\, \hat{L}^{(i)}, \quad  u^{(i)} = D^{(i)}_{\text{up}}\, \hat{U}^{(i)}.
% \]
\end{equation*}

% We now derive sector bounds for the pre-activation variables \( \nu^{(i)} \in \mathbb{R}^{n_i} \) at layer \( i \), defined as \(\nu^{(i)} = W^{(i)} \omega^{(i-1)},\) where
% % \( W^{(i)} \in \mathbb{R}^{n_i \times n_{i-1}} \) is the weight matrix, and 
% \( \omega^{(i-1)} \in \mathbb{R}^{n_{i-1}} \) denotes the output of the \((i-1)\)th layer. 

% Assume that the activation output \( \omega^{(i-1)} \) is sector bounded with respect to the input \( y \in \mathbb{R}^{n_0} \) as:

% where \( l^{(i-1)}, u^{(i-1)} \in \mathbb{R}^{n_{i-1} \times n_0} \) are the lower and upper slope matrices, respectively.

To propagate this sector constraint through the linear map \( W^{(i+1)} \), we use the positive and negative decompositions of the weight matrix:
\begin{align}\small
    &(W^{(i+1)}_+)_{jk} = \max(W^{(i+1)}_{jk}, 0),\nonumber\\
    &(W^{(i+1)}_-)_{jk} = \min(W^{(i+1)}_{jk}, 0),\nonumber
\end{align}
so that \( W^{(i+1)} = W^{(i+1)}_+ + W^{(i+1)}_- \).

Applying this decomposition, the propagated bounds on \( \nu^{(i+1)} \) are given by:
\begin{align*}\begin{aligned}\small
    W^{(i+1)}_+ l^{(i)} y + W^{(i+1)}_- u^{(i)} y 
    &\leq \nu^{(i+1)} \\
    &\leq W^{(i+1)}_- l^{(i)} y + W^{(i+1)}_+ u^{(i)} y.
\end{aligned}\end{align*}

Thus, the pre-activation logits \( \nu^{(i+1)} \) are sector bounded with respect to \( y \) as:
\begin{equation}\label{eq:lastofWs}
    L^{(i+1)} y \leq \nu^{(i+1)} \leq U^{(i+1)} y,
\end{equation}
where the new slope matrices are defined by:
{\small\begin{align*}
    L^{(i+1)} &= W^{(i+1)}_+ l^{(i)} + W^{(i+1)}_- u^{(i)}, \\
    U^{(i+1)} &= W^{(i+1)}_- l^{(i)} + W^{(i+1)}_+ u^{(i)}.
\end{align*}}

This recursive formulation enables the forward propagation of sector bounds across the weight matrices of each layer.

The equation \eqref{eq:lastofWs} can be fed into equation \eqref{eq:firstoftanh} to close the loop of iteration for layers.

\subsection{Sector Bound for the Entire Neural Network}
We now combine the results of the previous two subsections to derive a sector bound for the entire FFNN. We explain how to go step by step to get the whole sector bound.
We start from the sector bound of the first layer given in equation \eqref{eq:bound1}, and propagate it forward iteratively using the results of Subsections \ref{subsubsec:tanhrelax} (activation relaxations) and ~\ref{subsec:lintrans} (weight matrix transformations). This process yields a final sector bound for the overall neural network output with respect to its input. We summarize the result in the following.
% Starting from the input \( y \in \Gamma \subseteq \mathbb{R}_+^{n_0} \), we propagate the bounds layer by layer through weight matrix transformations and activation functions.

% From this point, the sector bounds can be

\begin{theorem}[Local Sector Bound for FFNN]\label{thm:nn_sector_bound}
Consider an FFNN as in \eqref{eq:NNcontroller} $\pi:\mathbb{R}_+^p\to\mathbb{R}^m$ with sector bounded activation functions. Given the set $\Gamma$, then the sector bound for the network mapping \( \mathrm{NN}(y) \) can be calculated as:
{\small\begin{align}\label{eq:localsectorbound}
&\gamma_1\, y \leq \mathrm{NN}(y) \leq \gamma_2\, y, \quad \forall y \in \Gamma,\\
&\gamma_1 = L^{(q+1)} \left( \prod_{i=2}^{q} D_{\text{low}}^{(i)} \hat{L}^{(i)} \right) W^{(1)},\nonumber\\
&\gamma_2 = U^{(q+1)} \left( \prod_{i=2}^{q} D_{\text{up}}^{(i)} \hat{U}^{(i)} \right) W^{(1)},\nonumber \end{align}}

and the matrices \( D_{\text{low}}^{(i)}, D_{\text{up}}^{(i)}, \hat{L}^{(i)}, \hat{U}^{(i)} \) are defined according to the activation relaxation scheme described in \eqref{case:cases}.
\end{theorem}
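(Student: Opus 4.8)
The plan is to establish \eqref{eq:localsectorbound} by induction on the layer index, taking as the inductive hypothesis that each pre-activation vector admits a sector bound that is \emph{linear} in the input, namely $L^{(i)}y \le \nu^{(i)} \le U^{(i)}y$ for all $y\in\Gamma$. The base case $i=1$ is exactly \eqref{eq:bound1}: since $\nu^{(1)}=W^{(1)}y$ holds with equality, I may take $L^{(1)}=U^{(1)}=W^{(1)}$. The aim of the induction is to show that this linear form is preserved across every activation and every weight multiplication, so that the overall composition collapses into the claimed product of slope matrices.

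For the inductive step I would carry out, in order, the two propagation operations already derived in the preceding subsections. First, assuming $L^{(i)}y\le\nu^{(i)}\le U^{(i)}y$, the activation relaxation of Subsection~\ref{subsubsec:tanhrelax} yields the post-activation sector bound \eqref{eq:DandL}, i.e. $l^{(i)}y\le\phi(\nu^{(i)})\le u^{(i)}y$ with $l^{(i)}=D^{(i)}_{\mathrm{low}}\hat L^{(i)}$ and $u^{(i)}=D^{(i)}_{\mathrm{up}}\hat U^{(i)}$. The validity of this envelope rests on the monotonicity of $\tanh$ together with the sign-based case split \eqref{case:cases}; in the zero-crossing case the nonnegativity $y\ge0$ (which holds because $\Gamma\subseteq\mathbb{R}_+^p$ for a positive system) is precisely what lets me pull $y$ outside the absolute values and retain a genuinely linear relaxation. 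Second, feeding this post-activation bound through $W^{(i+1)}$ via the decomposition $W^{(i+1)}=W^{(i+1)}_++W^{(i+1)}_-$ produces the next pre-activation bound \eqref{eq:lastofWs}, advancing the index from $i$ to $i+1$ and closing the induction.

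With both substeps in hand, I would chain them from $i=1$ up to $i=q$ and then apply the activation-free output layer $u=W^{(q+1)}\omega^{(q)}$. Composing the linear envelopes layer by layer telescopes the lower slopes into $\gamma_1=L^{(q+1)}\bigl(\prod_{i=2}^{q}D^{(i)}_{\mathrm{low}}\hat L^{(i)}\bigr)W^{(1)}$ and the upper slopes into the analogous product built from $D^{(i)}_{\mathrm{up}}$, $\hat U^{(i)}$, and $U^{(q+1)}$, which is exactly \eqref{eq:localsectorbound}. The outermost factors $L^{(q+1)},U^{(q+1)}$ absorb the final weight propagation, while $W^{(1)}$ appears as the innermost factor inherited from the base case.

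The hard part will be the sign bookkeeping across the weight-propagation step. The recursion of Subsection~\ref{subsec:lintrans} genuinely couples the two envelopes, since $L^{(i+1)}=W^{(i+1)}_+l^{(i)}+W^{(i+1)}_-u^{(i)}$ routes the upper slope $u^{(i)}$ into the lower slope through the negative part $W^{(i+1)}_-$. Recovering the compact product form of \eqref{eq:localsectorbound} therefore requires tracking exactly how the lower and upper slope matrices interleave as the recursion unfolds, and verifying that, under the case definitions in \eqref{case:cases} and the positivity $y\ge0$, the composition reduces to the stated ordered products $\prod_i D^{(i)}_{\mathrm{low}}\hat L^{(i)}$ and $\prod_i D^{(i)}_{\mathrm{up}}\hat U^{(i)}$. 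Pinning down this monotone propagation of the ordering $L^{(i)}y\le U^{(i)}y$ through each layer is the crux of the argument; the base case, the output-layer multiplication, and the remaining algebra are then routine matrix bookkeeping.
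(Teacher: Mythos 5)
Your inductive skeleton is essentially the paper's own argument: the paper gives no formal proof of Theorem~\ref{thm:nn_sector_bound} beyond stating that one starts from \eqref{eq:bound1} and propagates forward iteratively through the activation relaxation \eqref{eq:DandL} and the weight-matrix step \eqref{eq:lastofWs}, which is exactly your base case and two-part inductive step. Up to that point your proposal is as complete as what the paper provides.

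The difficulty is the step you defer to the end. The verification you call the ``crux''---that the interleaved recursion collapses into the ordered products $\gamma_1 = L^{(q+1)}\bigl(\prod_{i=2}^{q}D^{(i)}_{\text{low}}\hat L^{(i)}\bigr)W^{(1)}$ and its upper counterpart---would in fact fail, because the recursion does not produce those products. By the paper's own definitions, $\hat L^{(i)},\hat U^{(i)}\in\mathbb{R}^{n_i\times n_0}$ are \emph{cumulative} objects: $L^{(i)}$ is generated by \eqref{eq:lastofWs} from $l^{(i-1)}=D^{(i-1)}_{\text{low}}\hat L^{(i-1)}$ and $u^{(i-1)}=D^{(i-1)}_{\text{up}}\hat U^{(i-1)}$, so each $\hat L^{(i)}$ already maps the network input $y$ all the way to layer $i$. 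The natural and correct endpoint of your induction is therefore simply $\gamma_1=L^{(q+1)}$ and $\gamma_2=U^{(q+1)}$, obtained from one final application of \eqref{eq:lastofWs} to $l^{(q)},u^{(q)}$. Multiplying the $\hat L^{(i)}$ together again, and post-multiplying by $W^{(1)}$, double-counts the early layers and is not even dimensionally consistent unless all layer widths equal $n_0$ (each factor $D^{(i)}_{\text{low}}\hat L^{(i)}$ is $n_i\times n_0$). Moreover, as you yourself observe, the $W^{(i+1)}_-$ terms route the upper envelope $u^{(i)}$ into the lower bound $L^{(i+1)}$, so no product built purely from ``low'' factors can represent the lower slope. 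The fix is not more bookkeeping: state and prove the conclusion in the cumulative form $\gamma_1=L^{(q+1)}$, $\gamma_2=U^{(q+1)}$---your induction already establishes exactly this---and treat the displayed product in \eqref{eq:localsectorbound} as the paper's (imprecise) shorthand for that recursion rather than as an identity to be verified.
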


% \subsubsection{ROA with local sector bounds}
This local sector bound enables the application of the local positive Aizerman conjecture for analyzing the stability of NN feedback systems. In particular, we utilize Theorem~\ref{the:mathmanip} and Lemma~\ref{lem:mathmanip} to certify local exponential stability and estimate the ROA of NN-controlled feedback loops.

\begin{corollary}\label{cor:localpositiveaizerman}
Consider the Lur'e system defined in \eqref{eq:positiveluresystem}, where the nonlinearity \( \Phi \) is realized by an FFNN as described in \eqref{eq:NNcontroller}. Given a compact set \( \Gamma \) if the matrix \( A + B\gamma_1C \) is Metzler and \( A + B\gamma_2C \) is Hurwitz, with slope matrices \( \gamma_1, \gamma_2 \) computed via \eqref{eq:localsectorbound},
then the closed loop system is locally exponentially stable for all trajectories that remain within the set \( \Gamma \).
\end{corollary}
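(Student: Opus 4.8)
The plan is to establish Corollary~\ref{cor:localpositiveaizerman} by recognizing it as a direct specialization of Theorem~\ref{the:mathmanip} to the case where the feedback nonlinearity $\Phi$ is an FFNN. The entire technical burden has, by design, been shifted into Theorem~\ref{thm:nn_sector_bound}, which supplies computable sector slope matrices $\gamma_1, \gamma_2$. Thus the proof reduces to verifying that the hypotheses of Theorem~\ref{the:mathmanip} are met once we identify $\Sigma_1 \equiv \gamma_1$ and $\Sigma_2 \equiv \gamma_2$.

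First I would invoke Theorem~\ref{thm:nn_sector_bound} to conclude that, over the compact set $\Gamma$, the network mapping satisfies $\gamma_1 y \le \mathrm{NN}(y) \le \gamma_2 y$ for all $y \in \Gamma$. By Definition~\ref{def:gammasectorbound}, this is exactly the statement that $\Phi = \mathrm{NN}$ is $\Gamma$-sector bounded within the interval $[\gamma_1, \gamma_2]$, provided one first checks $\gamma_1 \le \gamma_2$; this ordering follows from the construction in \eqref{case:cases}, where in each case the lower envelope is built from $D^{(i)}_{\text{low}}$ and the upper from $D^{(i)}_{\text{up}}$ with matching sign structure, and is preserved under the nonnegative weight-decomposition propagation of Subsection~\ref{subsec:lintrans}. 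Next I would simply state the two hypotheses carried over from the corollary's assumptions: $A + B\gamma_1 C$ is Metzler and $A + B\gamma_2 C$ is Hurwitz. Since $B, C \ge 0$ by assumption on the positive Lur'e system, all structural requirements of Theorem~\ref{the:mathmanip} are satisfied with $\Sigma_1 = \gamma_1$, $\Sigma_2 = \gamma_2$.

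With these identifications in place, applying Theorem~\ref{the:mathmanip} directly yields that every output trajectory remaining within $\Gamma$ is exponentially stable, which is precisely the claimed local exponential stability of the closed loop. I would close by noting that the equilibrium condition $y_* \in \Gamma$ required by Theorem~\ref{the:mathmanip} is inherited from the problem formulation (the network is bias-free with $\mathrm{NN}(0)=0$, so $y_* = 0 \in \Gamma$ for any $\Gamma$ containing the origin, consistent with Assumption~\ref{assu1}).

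The main obstacle is not the logical chaining, which is essentially immediate, but rather ensuring that the ordering $\gamma_1 \le \gamma_2$ genuinely holds as a componentwise matrix inequality after the full layer-wise propagation, since this is a prerequisite for $[\gamma_1, \gamma_2]$ to be a valid sector interval in the sense of Definition~\ref{def:gammasectorbound}. This must be traced carefully through the third (zero-crossing) case of \eqref{case:cases}, where absolute values are introduced and the lower slope becomes negative while the upper becomes positive; one must confirm that the positive/negative decomposition $W^{(i+1)} = W^{(i+1)}_+ + W^{(i+1)}_-$ used in Subsection~\ref{subsec:lintrans} preserves the inequality $L^{(i+1)} \le U^{(i+1)}$ at each successive layer. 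I would argue this preservation holds because the propagation formulas place $W_+$ on matching envelopes and $W_-$ on crossed envelopes, so that the valid pointwise enclosure is maintained inductively from the base case \eqref{eq:bound1}.
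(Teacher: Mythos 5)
Your proposal is correct and follows essentially the same route as the paper, which disposes of this corollary in one line: it ``follows directly from Theorem~\ref{the:mathmanip} and Theorem~\ref{thm:nn_sector_bound}'' by identifying $\Sigma_1=\gamma_1$, $\Sigma_2=\gamma_2$. Your additional checks---the componentwise ordering $\gamma_1\le\gamma_2$ after layer-wise propagation and the requirement $y_*=0\in\Gamma$---are details the paper leaves implicit, and verifying them is appropriate diligence rather than a departure in method.
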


This result follows directly from Theorem~\ref{the:mathmanip} and Theorem~\ref{thm:nn_sector_bound}. Furthermore, in the scalar case (i.e., when the Lur’e system is MISO), Lemma~\ref{lem:mathmanip} can be applied to estimate the region of attraction. To do so, one computes the upper bound of \( \Gamma \), and then applies the analytical formula in \eqref{eq:ROA} to determine the set of stable initial conditions.

\vspace{-.1cm}
\section{Numerical Example}
Consider a Lur’e system as in \eqref{eq:positiveluresystem}, where an open-loop unstable LTI plant is stabilized by an NN controller. The plant dynamics are given by:  
\begin{equation*}\small
A = \begin{bmatrix}
    -7 & 5 \\
     6 & 1
\end{bmatrix}, \quad
B = \begin{bmatrix}
    1 \\
    2
\end{bmatrix}, \quad
C = \begin{bmatrix}
    1 & 1
\end{bmatrix}.
\end{equation*}
Using parametric analysis, we compute the minimal lower sector \( \Sigma_1 = -3 \) such that the matrix \( A + B\Sigma_1 C \) becomes Metzler, and the maximal upper sector \( \Sigma_2 = -1.276 \) such that \( A + B\Sigma_2 C \) is Hurwitz. These values define the stable sector interval \( [\Sigma_1, \Sigma_2] \) for the application of Aizerman-type results. A PID controller with gains \( P = -1.25 \), \( I = -0.1 \), and \( D = -0.1 \) is used as a benchmark for generating training data for a stabilizing NN controller. We collect closed loop input-output trajectory data by randomly initializing the system near \( x_0 = [2,\ 2]^\top \). An FFNN is trained to approximate the stabilizing PID controller. Figure~\ref{fig:NNandsigmas} shows the resulting NN and its input-output relation relative to the stable sector bounds \( [\Sigma_1, \Sigma_2] \). It is evident that the trained NN is  $\Gamma-$sector bounded within the interval. Numerically, we identify the $\Gamma$ set and its maximal element to be \( \overline{y} = 12.2 \).

\begin{figure}[h]
\renewcommand\thesubfigure{\thefigure\alph{subfigure}} % makes subfig labels like 4a, 4b
\makeatletter
\renewcommand{\p@subfigure}{} % Prevents duplicating the main figure number
\makeatother
    \centering
    \begin{subfigure}[t]{0.49\linewidth}
        \centering
        \includegraphics[width=\linewidth]{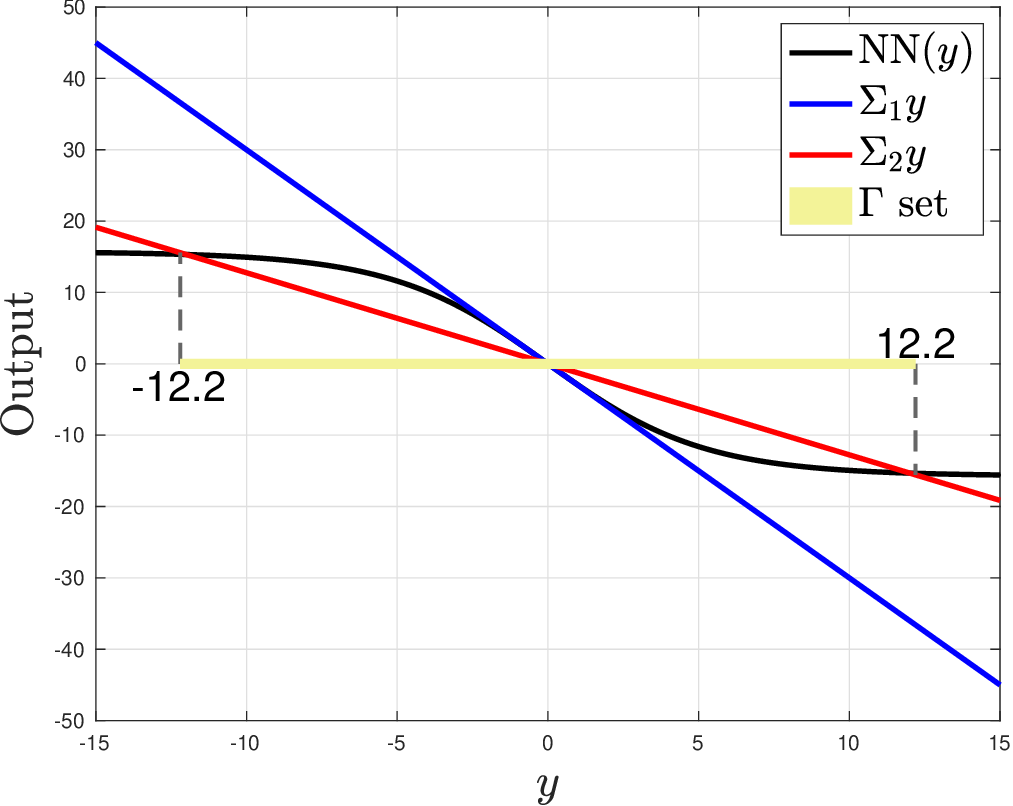}
        \caption{NN output, safe sector bounds, and corresponding $\Gamma$ set.\vspace{-.3cm}}
        \label{fig:NNandsigmas}
    \end{subfigure}
    \hfill
    \begin{subfigure}[t]{0.49\linewidth}
        \centering
        \includegraphics[width=\linewidth]{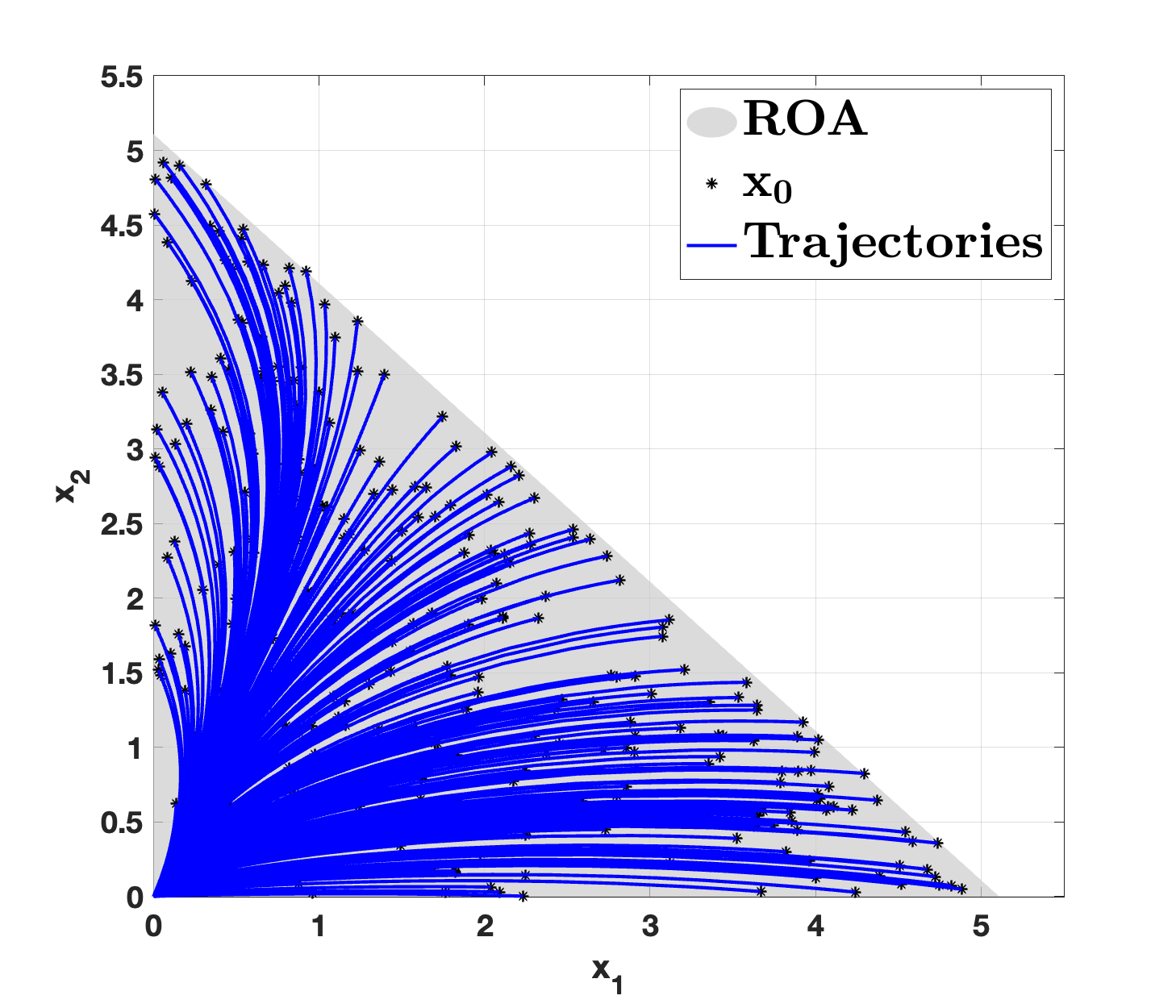}
        \caption{Stability of trajectories inside the ROA estimated by Lemma~\ref{lem:mathmanip}.\vspace{-.3cm}}
        \label{fig:mathmanipROA}
    \end{subfigure}
    % \caption{\small Illustration of the local sector bound and the corresponding ROA.}
    % \label{fig:combinedFig}
\end{figure}

\paragraph{Region of Attraction via Local Positive Aizerman}  
Regarding Lemma~\ref{lem:mathmanip}, we compute $v$ with the max ratio \( v_m / v_M = 0.42 \). Applying equation~\eqref{eq:ROA}, we obtain the estimate of ROA as
\(
Cx_0 \leq 5.12.
\)
Figure~\ref{fig:mathmanipROA} shows closed-loop trajectories demonstrating exponential convergence for initial conditions within this ROA.

\paragraph{Region of Attraction via Lyapunov Method}
We now apply the Lyapunov-based approach by solving the matrix inequality \eqref{eq:condit3}. This yields a quadratic Lyapunov function
\begin{equation*}\small
    V(x) = x^\top P_{\text{lm}} x, \textrm{ with } P_{\text{lm}} = \begin{bmatrix}
    0.1235 & 0.1323 \\
    0.1323 & 0.3919
\end{bmatrix}
\end{equation*}
Using Algorithm~\ref{alg:lyapunov_roa}, we estimate the largest sublevel set satisfying \( \dot{V}(x) < 0 \), resulting in the ROA
$x^\top P_{\text{lm}} x \leq 32.$
Figure~\ref{fig:lyapvdotregions} visualizes this approach by showing sublevel sets that remain in the negative $\dot V$ area, and Figure~\ref{fig:lyapsimul} shows random trajectories starting inside the estimated ROA.\vspace{-.1cm}
\begin{figure}[h]
\renewcommand\thesubfigure{\thefigure\alph{subfigure}} % makes subfig labels like 4a, 4b
\makeatletter
\renewcommand{\p@subfigure}{} % Prevents duplicating the main figure number
\makeatother
    \centering
    \begin{subfigure}[t]{0.45\linewidth}
        \centering
        \includegraphics[width=\linewidth]{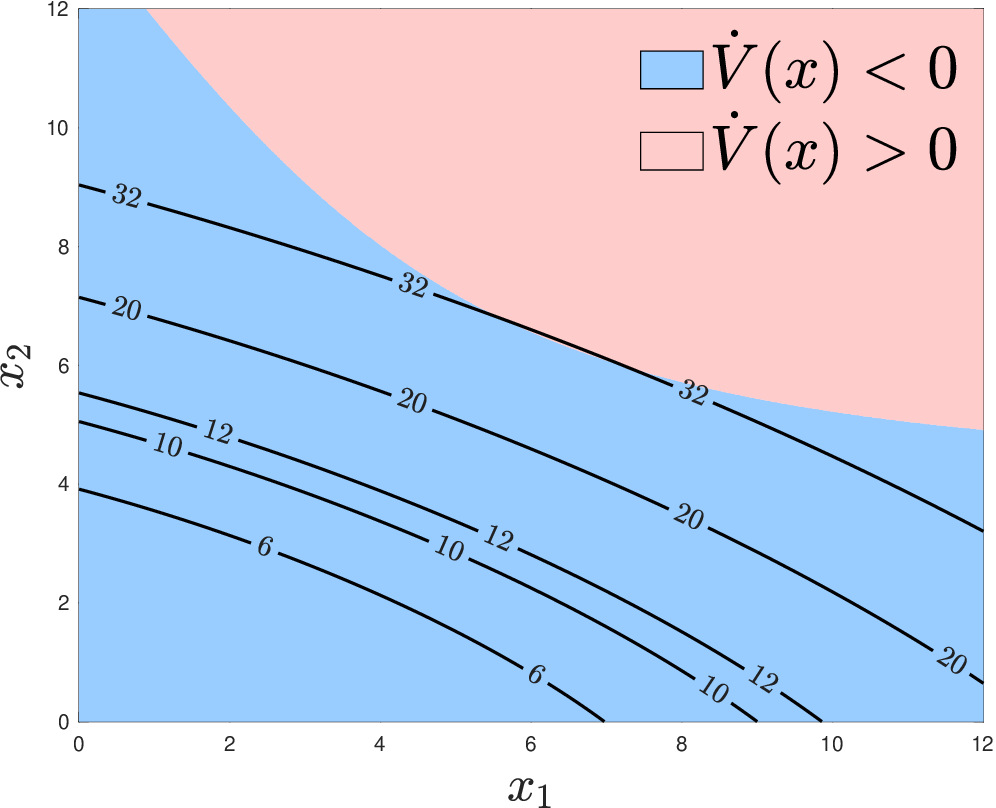}
        \caption{Evaluation of $\dot{V}$ from equation~\eqref{eq:vdotoflyap} across $\mathbb{R}^2$ with overlaid sublevel sets of $V$.\vspace{-.3cm}}
        \label{fig:lyapvdotregions}
    \end{subfigure}
    \hfill
    \begin{subfigure}[t]{0.45\linewidth}
        \centering
        \includegraphics[width=\linewidth]{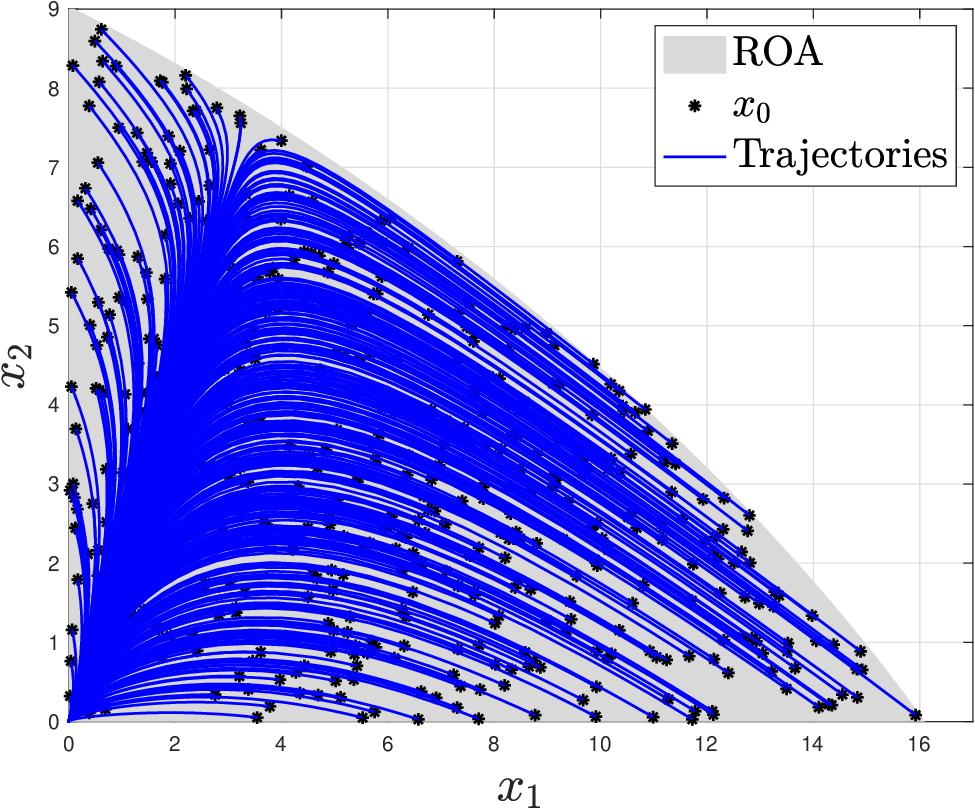}
        \caption{Stability of trajectories inside the ROA estimated by Algorithm \ref{alg:lyapunov_roa}.\vspace{-.3cm}}
        \label{fig:lyapsimul}
    \end{subfigure}
    \vspace{-.4cm}
    % \caption{\small Illustration of Lyapunov based method.}
    % \label{fig:combinedFig}
\end{figure}

\begin{figure}[h]
    \centering
    \includegraphics[width=\linewidth]{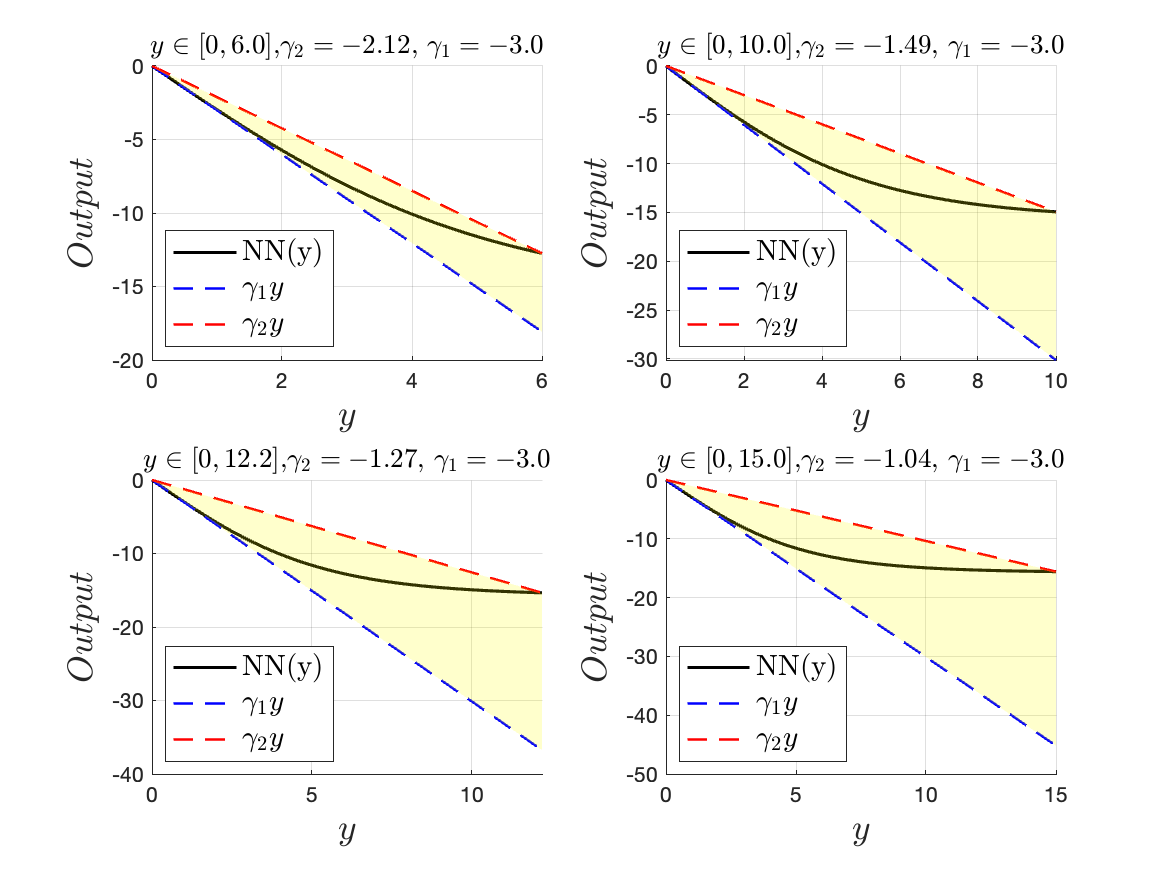}
    \caption{Local sector bounds calculated for given $y\in\Gamma$ using equation \eqref{eq:localsectorbound}. The title of each subplot defines the $\Gamma$ set and the corresponding $[\gamma_1,\gamma_2]$.\vspace{-.3cm}}
    \label{fig:localsectorbounds}
\end{figure}

\paragraph{Local Sector Bound}
We now evaluate our proposed local sector bound formulation. As shown in Figure~\ref{fig:localsectorbounds}, our local sector bounds tightly follow the NN output. This figure demonstrates how the sector bounds evolve as a function of the input range. As expected, the bounds widen with increasing input magnitude and intersect the critical upper sector \( \Sigma_2 = -1.276 \) at \( y\in [0,12.2] \). This value represents the largest admissible \( \Gamma \) set. Applying Corollary \ref{cor:localpositiveaizerman} and Lemma~\ref{lem:mathmanip} using \( \overline{y} = 12.2 \) recovers the same ROA plot as shown in Figure~\ref{fig:mathmanipROA}. Moreover, Figure~\ref{fig:bounds} compares our local sector bounds with the CROWN and IBP bounds calculated using the auto-LIRPA library \cite{wang2021beta}, showing structural differences.

\paragraph{Comparison with IQC-Based Approach}
As a benchmark, we compare our method against one of the best approaches in the literature \cite{yin2021stability}, which uses IQCs to bound NN nonlinearities. This method produces a Lyapunov function
\begin{equation*}\small
    V(x) = x^\top P_{\text{qc}} x, \textrm{ with } P_{\text{qc}} = \begin{bmatrix}
    0.1675  & -0.0224 \\
   -0.0224  &  0.0668
\end{bmatrix},
\end{equation*}
and yields a conservative ROA of
\(
x^\top P_{\text{qc}} x \leq 1.
\)

Table~\ref{tab:comparison} summarizes the comparison of methods in terms of runtime and estimated ROA, while Figure~\ref{fig:allROAs} visualizes the corresponding ROA estimates. As shown, the local sector bound approach outperforms the IQC-based method in both scalability and conservatism of ROA estimation. As expected, the Lyapunov method provides the largest underapproximation at the cost of computational complexity.
\vspace{-.5cm}
\begin{table}[h]
{\small
    \centering
    \begin{tabularx}{\linewidth}{|X|c|c|}
        \hline
        \centering \textbf{Method} & \textbf{Runtime (s)} & \textbf{ROA} \\ \hline
        Local Sector Bound & $5 \times 10^{-4}$ & $Cx_0 \leq 5.12$ \\
        Lyapunov-based Method & $5.45$ & $x^\top P_{\text{lm}} x \leq 32$ \\
        IQC-based & $1.03$ & $x^\top P_{\text{qc}} x \leq 1$ \\ \hline
    \end{tabularx}}
    \caption{Comparison of methods based on runtime and ROA.\vspace{-.4cm}}
    \label{tab:comparison}
\end{table}

% \begin{figure}
%     \centering
%     \includegraphics[width=0.5\linewidth]{}
%     \caption{\small Comparison of estimated ROAs using Local Sector Bound method, Lyapunov-based approach, and IQC method.\vspace{-.6cm}}
%     \label{fig:allROAs}
% \end{figure}
\vspace{-.5cm}
\begin{figure}[h]
\renewcommand\thesubfigure{\thefigure\alph{subfigure}} % makes subfig labels like 4a, 4b
\makeatletter
\renewcommand{\p@subfigure}{} % Prevents duplicating the main figure number
\makeatother
    \centering
    \begin{subfigure}[t]{0.49\linewidth}
        \centering
        \includegraphics[width=\linewidth]{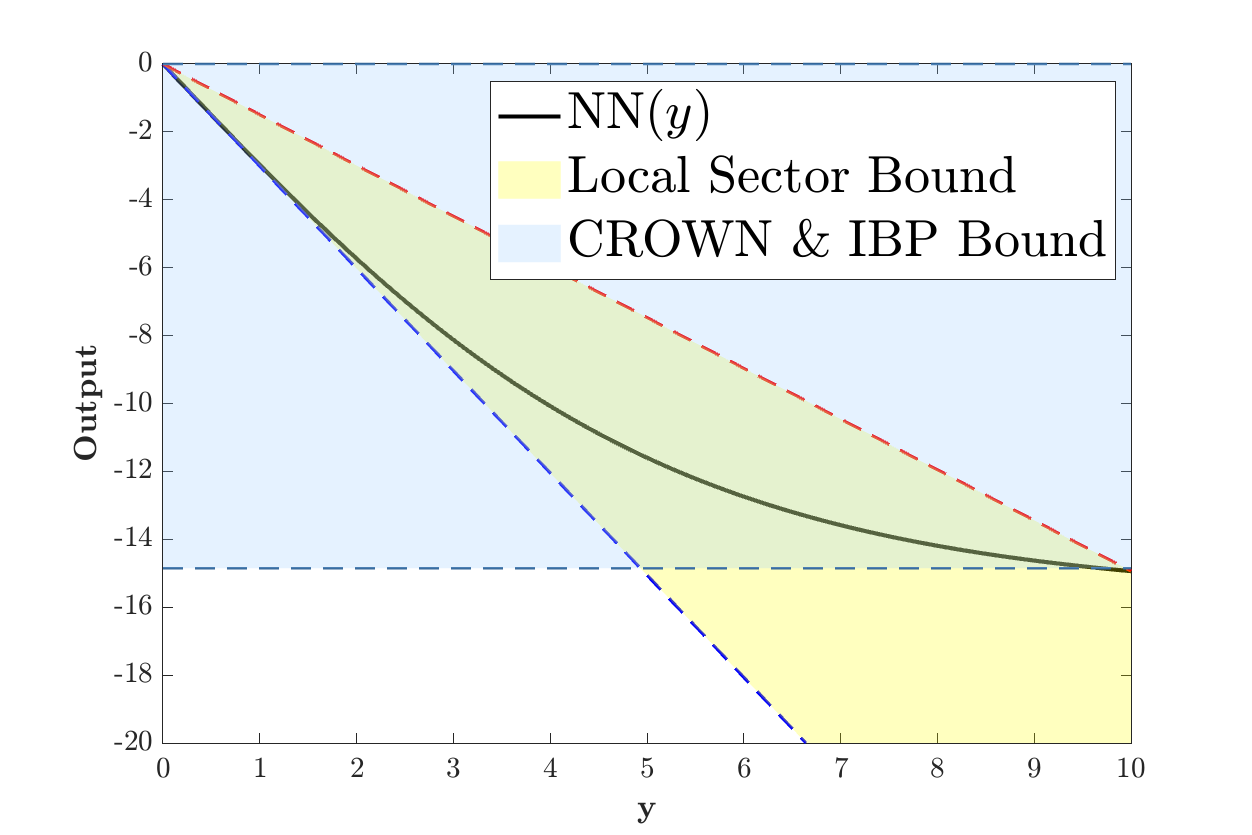}
        \caption{Comparison of local sector bounds vs. CROWN \& IBP bounds.\vspace{-.3cm}}
        \label{fig:bounds}
    \end{subfigure}
    \hfill
    \begin{subfigure}[t]{0.45\linewidth}
        \centering
        \includegraphics[width=\linewidth]{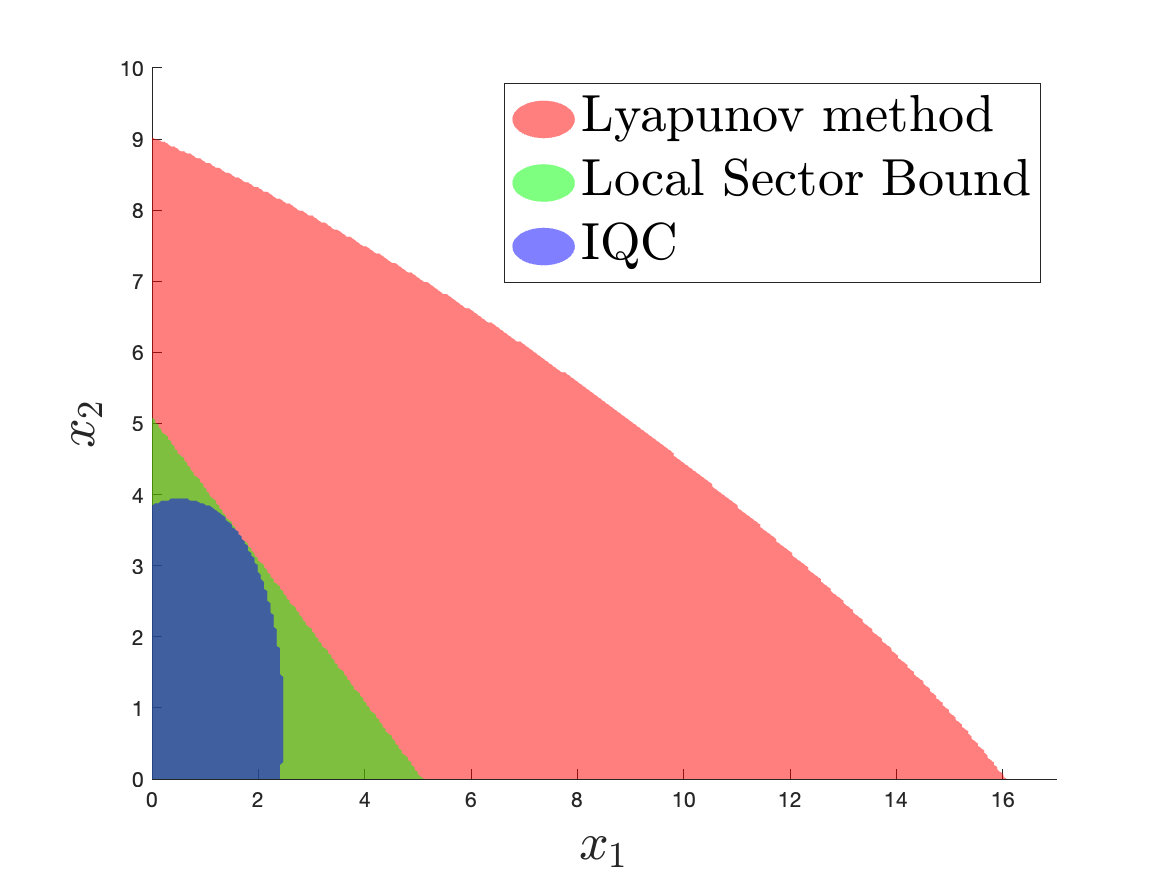}
        \caption{Comparison of our estimated ROAs vs. IQC method ROA.\vspace{-.3cm}}
        \label{fig:allROAs}
    \end{subfigure}
    % \caption{\small Illustration of Lyapunov based method.}
    % \label{fig:combinedFig}
\end{figure}

\vspace{-.3cm}
\section{Conclusion and Future Work}
% \section{Discussion and Conclusion}
This work presents a comprehensive framework for analyzing the local stability of positive Lur’e systems with NN feedback. By introducing a localized version of the positive Aizerman conjecture, we derive verifiable conditions under which output trajectories confined to a compact set exhibit exponential stability. This theoretical foundation enables us to propose two distinct methods for estimating the ROA around a stable equilibrium.
The first method leverages a Lyapunov-based approach, where we construct a quadratic Lyapunov function via LMIs and identify invariant sublevel sets guaranteeing convergence. This method is applicable to arbitrary sector-bounded nonlinearities, including NNs, and benefits from strong guarantees under classical positive system theory.
The second method introduces a \emph{novel sector bound for FFNNs}. By propagating local input bounds through the network using layer-wise linear relaxations, we derive tight, structure-dependent sector bounds without relying on global Lipschitz constants or loose norm-based approximations. This bound is seamlessly integrated into the localized Aizerman framework, enabling efficient stability certification and ROA estimation that scales with network size and complexity.
Simulation results demonstrate that both proposed methods outperform existing IQC-based techniques in terms of ROA size and computational efficiency.

Future work includes extending the framework to biased and recurrent neural networks, incorporating the novel bounds in other stability verification frameworks, and employing this approach for training safe learning-based control policies.
% \vspace{-.3cm}
\begin{spacing}{.8}
\bibliography{References} 

\begin{thebibliography}{10}

\bibitem{HORNIK1989359}
K.~Hornik, M.~Stinchcombe, and H.~White, ``Multilayer feedforward networks are universal approximators,'' {\em Neural Networks}, vol.~2, no.~5, pp.~359--366, 1989.

\bibitem{goel2024can}
G.~Goel and P.~Bartlett, ``Can a transformer represent a kalman filter?,'' in {\em 6th Annual Learning for Dynamics \& Control Conference}, pp.~1502--1512, PMLR, 2024.

\bibitem{honarpisheh2025generalization}
A.~Honarpisheh, M.~Bozdag, O.~Camps, and M.~Sznaier, ``Generalization error analysis for selective state-space models through the lens of attention,'' {\em arXiv preprint arXiv:2502.01473}, 2025.

\bibitem{yin2021stability}
H.~Yin, P.~Seiler, and M.~Arcak, ``Stability analysis using quadratic constraints for systems with neural network controllers,'' {\em IEEE Transactions on Automatic Control}, vol.~67, no.~4, pp.~1980--1987, 2021.

\bibitem{de2023event}
C.~de~Souza, A.~Girard, and S.~Tarbouriech, ``Event-triggered neural network control using quadratic constraints for perturbed systems,'' {\em Automatica}, vol.~157, p.~111237, 2023.

\bibitem{noori2024stability}
S.~V. Noori, B.~Hu, G.~Dullerud, and P.~Seiler, ``Stability and performance analysis of discrete-time relu recurrent neural networks,'' in {\em 2024 IEEE 63rd Conference on Decision and Control (CDC)}, pp.~8626--8632, IEEE, 2024.

\bibitem{noori2024complete}
S.~V. Noori, B.~Hu, G.~Dullerud, and P.~Seiler, ``A complete set of quadratic constraints for repeated relu and generalizations,'' {\em arXiv preprint arXiv:2407.06888}, 2024.

\bibitem{newton2022stability}
M.~Newton and A.~Papachristodoulou, ``Stability of non-linear neural feedback loops using sum of squares,'' in {\em 2022 IEEE 61st Conference on Decision and Control (CDC)}, pp.~6000--6005, IEEE, 2022.

\bibitem{jin2020stability}
M.~Jin and J.~Lavaei, ``Stability-certified reinforcement learning: A control-theoretic perspective,'' {\em IEEE Access}, vol.~8, pp.~229086--229100, 2020.

\bibitem{richardson2024strengthened}
C.~Richardson, M.~Turner, S.~Gunn, and R.~Drummond, ``Strengthened stability analysis of discrete-time lurie systems involving relu neural networks,'' in {\em 6th Annual Learning for Dynamics \& Control Conference}, pp.~209--221, PMLR, 2024.

\bibitem{szegedy2013intriguing}
C.~Szegedy, W.~Zaremba, I.~Sutskever, J.~Bruna, D.~Erhan, I.~Goodfellow, and R.~Fergus, ``Intriguing properties of neural networks,'' {\em arXiv preprint arXiv:1312.6199}, 2013.

\bibitem{zhang2019recurjac}
H.~Zhang, P.~Zhang, and C.-J. Hsieh, ``Recurjac: An efficient recursive algorithm for bounding jacobian matrix of neural networks and its applications,'' in {\em Proceedings of the AAAI Conference on Artificial Intelligence}, vol.~33, pp.~5757--5764, 2019.

\bibitem{wang2021beta}
S.~Wang, H.~Zhang, K.~Xu, X.~Lin, S.~Jana, C.-J. Hsieh, and J.~Z. Kolter, ``Beta-crown: Efficient bound propagation with per-neuron split constraints for neural network robustness verification,'' {\em Advances in neural information processing systems}, vol.~34, pp.~29909--29921, 2021.

\bibitem{gowal2018effectiveness}
S.~Gowal, K.~Dvijotham, R.~Stanforth, R.~Bunel, C.~Qin, J.~Uesato, R.~Arandjelovic, T.~Mann, and P.~Kohli, ``On the effectiveness of interval bound propagation for training verifiably robust models,'' {\em arXiv preprint arXiv:1810.12715}, 2018.

\bibitem{11012717}
M.~Bozdag, A.~Honarpisheh, and M.~Sznaier, ``Safe control for pursuit-evasion with density functions,'' {\em IEEE Control Systems Letters}, vol.~9, pp.~432--437, 2025.

\bibitem{bill2016stability}
A.~Bill, C.~Guiver, H.~Logemann, and S.~Townley, ``Stability of nonnegative lur'e systems,'' {\em SIAM Journal on Control and Optimization}, vol.~54, no.~3, pp.~1176--1211, 2016.

\bibitem{shafai2024positive}
B.~Shafai and F.~Zarei, ``Positive stabilization and observer design for positive singular systems,'' in {\em Proceedings of the 2024 63rd IEEE Conference on Decision and Control (CDC), Milan, Italy}, pp.~16--19, 2024.

\bibitem{10708136}
B.~Shafai, F.~Zarei, and A.~Moradmand, ``Stabilization of input derivative positive systems and its utilization in positive singular systems,'' in {\em 2024 10th International Conference on Control, Decision and Information Technologies (CoDIT)}, pp.~615--620, 2024.

\bibitem{drummond2022aizerman}
R.~Drummond, C.~Guiver, and M.~C. Turner, ``Aizerman conjectures for a class of multivariate positive systems,'' {\em IEEE Transactions on Automatic Control}, 2022.

\bibitem{Hamidrezaglobalsectorbound}
H.~Montazeri~Hedesh and M.~Siami, ``Ensuring both positivity and stability using sector-bounded nonlinearity for systems with neural network controllers,'' {\em IEEE Control Systems Letters}, vol.~8, pp.~1685--1690, 2024.

\bibitem{hedesh2025robust}
H.~M. Hedesh, M.~Wafi, B.~Shafai, M.~Siami, {\em et~al.}, ``Robust stability analysis of positive lure system with neural network feedback,'' {\em arXiv preprint arXiv:2505.18912}, 2025.

\bibitem{10412393}
A.~Mashhadireza and A.~Sadighi, ``Improved tracking performance of a lorentz actuator in contact tasks with model-based iterative learning control,'' in {\em 2023 11th RSI International Conference on Robotics and Mechatronics (ICRoM)}, pp.~821--826, 2023.

\bibitem{10903652}
A.~Mashhadireza and A.~Sadighi, ``Iterative learning control for friction compensation of a lorentz actuator for periodic references,'' in {\em 2024 12th RSI International Conference on Robotics and Mechatronics (ICRoM)}, pp.~697--702, 2024.

\bibitem{farina2011positive}
L.~Farina and S.~Rinaldi, {\em Positive linear systems: theory and applications}.
\newblock John Wiley \& Sons, 2011.

\bibitem{sontag2013mathematical}
E.~D. Sontag, {\em Mathematical control theory: deterministic finite dimensional systems}, vol.~6.
\newblock Springer Science \& Business Media, 2013.

\bibitem{shi2022efficiently}
Z.~Shi, Y.~Wang, H.~Zhang, J.~Z. Kolter, and C.-J. Hsieh, ``Efficiently computing local lipschitz constants of neural networks via bound propagation,'' {\em Advances in Neural Information Processing Systems}, vol.~35, pp.~2350--2364, 2022.

\end{thebibliography}
\end{spacing}

\end{document}